\newcommand {\cA}{{\mathcal{A}}}
\newcommand {\cE}{{\mathcal{E}}}
\newcommand {\cG}{{\mathcal{G}}}
\newcommand {\cN}{{\mathcal{N}}}
\newcommand {\cO}{{\mathcal{O}}}
\newcommand {\cS}{{\mathcal{S}}}
\newcommand {\cT}{{\mathcal{T}}}
\newcommand {\cV}{{\mathcal{V}}}
\newcommand {\bM} {{\mathbbm M}}
\newcommand {\bN} {{\bf N}}
\newcommand {\bO} {{\bf O}}
\newcommand {\bX} {{\bf X}}
\newcommand {\bY} {{\bf Y}}
\newcommand {\bff} {{\bf f}}
\newcommand {\bd} {{\bf d}}
\newcommand {\be} {{\bf e}}
\newcommand {\bh} {{\bf h}}
\newcommand {\bn} {{\bf n}}
\newcommand {\bo} {{\bf o}}
\newcommand {\bp} {{\bf p}}
\newcommand {\bs} {{\bf s}}
\newcommand {\bt} {{\bf t}}
\newcommand {\bu} {{\bf u}}
\newcommand {\bx} {{\bf x}}
\newcommand {\by} {{\bf y}}
\newcommand {\bmu} {\boldsymbol{\mu}}
\newcommand{\avg}{{\rm avg}}
\newcommand {\Z} {{\mathbbm Z}}
\newcommand {\N} {{\rm I\kern-1.5pt N}}
\newcommand {\R} {{\rm I\kern-2.5pt R}}
\newcommand {\C} {{\rm I\kern-5pt C}}
\newtheorem{lemma}{Lemma}
\newtheorem{theorem}{Theorem}
\newtheorem{defn}{Definition}
\newtheorem{assm}{Assumption}
\newcommand{\beqa}{\begin{eqnarray}}
\newcommand{\eeqa}{\end{eqnarray}}
\newcommand{\beqan}{\begin{eqnarray*}}
\newcommand{\eeqan}{\end{eqnarray*}}
\newcommand{\beq}{\begin{equation}}
\newcommand{\eeq}{\end{equation}}
\newcommand{\bfl}{\begin{flushleft}}
\newcommand{\efl}{\end{flushleft}}
\newcommand{\myb}{\hspace{-0.1in}}
\newcommand{\mydef}{& \hspace{-0.1in} := & \hspace{-0.1in}}
\newcommand{\myeq}{& \hspace{-0.1in} = & \hspace{-0.1in}}
\newcommand{\lb}{\nonumber \\}
\newcommand{\myarr}{\begin{array}{lll}}
\newcommand{\indicate}[1]{{\bf{1}} \left\{#1\right\}}
\newcommand{\myleq}{& \myb \leq & \myb}
\newcommand{\bitem}{\begin{itemize}}
\newcommand{\eitem}{\end{itemize}}
\newcommand{\benum}{\begin{enumerate}}
\newcommand{\eenum}{\end{enumerate}}
\newcommand{\E}[1]{{\mathbbm E}\left[ #1 \right]}
\newcommand{\bP}[1]{{\mathbbm P}\left[ #1 \right]}
\newcommand{\myhb}{\hspace{-0.3in}}
\newcommand{\myhf}{\hspace{0.3in}}
\newcommand{\myskip}{\\ \vspace{-0.1in}}
\newcommand{\ER}{Erd$\ddot{\rm o}$s-R$\acute{\rm e}$nyi }
\newcommand{\change}[1]{{\color{black}{#1}}}
\def\QED{~\rule[-1pt]{5pt}{5pt}\par\medskip}
\newenvironment{proof}{{\bf Proof: \ }}{ \hfill \QED}
\begin{document}
%
\title{Cascading Failures in Interdependent Systems: Impact of 
	Degree Variability and Dependence}


\author{Richard J. La\thanks{This work was 
supported in part by contracts
70NANB14H015 and 70NANB16H024 from National Institute
of Standards and Technology.} 
\thanks{Author is with the Department of Electrical \& 
Computer Engineering (ECE) and the Institute for Systems 
Research (ISR) at the University of Maryland, College Park.
E-mail: hyongla@umd.edu}
}


\IEEEtitleabstractindextext{%
\begin{abstract}
We study cascading failures in a system comprising 
interdependent networks/systems, 
in which nodes rely on other nodes both in the same 
system and in other systems to perform their function. 
The (inter-)dependence among
nodes is modeled using a dependence graph, where the 
degree vector of a node determines the number of other
nodes it can potentially cause to fail in {\em each} 
system through aforementioned dependency. 
In particular, 
we examine the impact of the variability and dependence 
properties of node degrees on the probability of
cascading failures. We show that larger 
variability in node degrees hampers widespread failures
in the system, starting with {\em random} failures. 
Similarly, positive correlations in node
degrees make it harder to set off an epidemic of 
failures, thereby rendering the system more robust
against random failures. 
\end{abstract}

\begin{IEEEkeywords}
Cascading failures, interdependent systems.
\end{IEEEkeywords}}


\maketitle

\section{Introduction}
\label{sec:Introduction}

Many systems providing critical services to modern
societies (e.g., 
smart grids, manufacturing systems, transportation 
systems) comprise multiple 
heterogeneous systems that support each other
to enable the functionality of the overall system.
In particular, (local) decision makers or subsystems
belonging to different constituent or
component systems (CSes) rely on each other to 
perform various functions. 
For instance, a modern power system not only includes 
an electrical grid/network, but also depends on an 
information and communication network (ICN) to 
monitor the state of the electrical network and 
to communicate and execute appropriate control 
actions based on the observed state. 

Throughout the paper, we refer to the 
(local) decision makers or subsystems in various
CSes simply as {\em agents}.
Intricate (inter-)dependence among agents in CSes 
makes the analysis of these complex systems challenging. 
Moreover, in some cases, (random or targeted) 
failures of a small number of agents in one CS have
potential to cause unexpected, widespread failures 
of many agents in multiple CSes. 

The 2003 blackout in Italy provides
a good example~\cite{Rosato2008}.  
The onset of the blackout was 
triggered by an initial failure in the power grid, 
which caused a disruption to both the power grid 
and the ICN used for communication between power 
substations. This secondary failure in ICN further 
hampered the communication between stations and 
their regulation, setting off
rapid cascading failures throughout a large part
of the power grid.  

As illustrated by this example, due to increasing 
reliance of modern societies on such complex 
systems and interdependence among CSes, there is 
a growing interest in modeling and understanding 
the interaction between (agents in) interdependent 
CSes and the robustness of the overall systems
(e.g., \cite{Albert2000, Baxter2012, 
Buldyrev2010, Gao2011, Kenett2014, Rosato2008, 
Shao2011, Son2012, Vesp, Zhuang2016}). Yet, 
there is no theory that allows us to answer even 
a basic question, {\em ``Given two different 
interdependent networks or systems, when can
we say that one network or system is more
robust than the other?"}

The overarching goal of our study, which 
complements those of existing studies (summarized
in Section~\ref{sec:Related}), is to 
contribute to the emerging theory on complex 
systems, in particular on the influence of 
the {\em dependence structure properties} 
between agents
on the robustness of the systems with respect to 
{\em localized, random} failures in CSes. 
Our hope is that the findings will help engineers
and researchers identify critical properties
of robust systems and incorporate them into
{\em design guidelines} of complex systems. 

To this end, we develop a general model for capturing 
the propagation of failures from one agent to another 
both within individual CSes and across multiple CSes. 
This model is similar to that of \cite{Kham2016}
and allows us to introduce {\em asymmetric} dependence 
among agents belonging to {\em heterogeneous} systems 
(e.g., electrical network vs. 
ICN) and to study different ways in which failures 
can proliferate through diverse CSes.

Some key questions we are interested in are: (a) 
When is it possible for a localized initial failure
in one CS, beginning with one or a small number of 
randomly chosen agents, to trigger a cascade of failures 
not only within the CS in which the initial 
failure originated, but also in other CSes? 
(b) How does the underlying dependence 
structure among the agents in various CSes
influence the dynamics of failure propagation and 
the likelihood of such cascading failures? 
(c) How can we identify susceptible CSes that are more
likely to set off widespread failures across many 
CSes, starting with a few initial failures
in the CSes? In this study, we aim to offer 
partial answers to these important questions. 

While we carry out the study in the framework of
propagating failures in interdependent systems, 
we suspect that the basic model and approach as 
well as some
of key findings can be extended to other 
applications with appropriate changes. 
These applications include (i) information or rumor
propagation or new technology adoption via
multiple social networks, 
(ii) an epidemic of disease across multiples
geographic locations (e.g., cities or countries), 
and (iii) spread of malware in the Internet.

\subsection{Summary of main results}

We model the (inter-)dependence among agents 
with the help of what is known in the literature 
as a {\em degree-based} model or {\em Chung-Lu} 
model~\cite{ChungLu}. 
A similar model is used in many 
existing studies, e.g., 
\cite{Brummitt2012, Buldyrev2010, La_TON, 
Watts2002, Zhuang2016}. 
In order to capture different manners in which 
failures can spread both within various CSes
and between CSes, we model the dynamics of
failure propagation using a {\em 
multi-type branching process} with the
assumption that there are no cycles in the 
set of failed agents in a local neighborhood. 

We call other agents which an agent can 
influence and cause to fail 
its {\em (dependence) neighbors}, and 
model the number of neighbors that an agent
has in each CS, using a {\em random 
degree vector}. The $j$-th element in the 
degree vector of an agent belonging to  
the $i$-th CS ($i \neq j$) is the 
number of agents in the $j$-th CS which 
are dependent on the CS $i$ agent. 

Given fixed distributions of 
degree vectors for agents in different CSes, 
the tools from multi-type branching
process theory are employed to estimate 
the probability that a random initial failure 
in the $i$-th CS will give rise to cascading 
failures, affecting 
a large number of agents across the 
system. Since this probability of suffering
an epidemic of failures depends on which CS
suffers the initial failure, it also
tells us which CSes are more vulnerable to
random failures than others. 

The primary goal of our  study is to investigate 
how (i) the {\em variability of agents' degrees} 
(in the aforementioned random degree vectors) and 
(ii) the {\em dependence structure of the degrees} 
influence the likelihood of a random failure 
in a CS sparking a chain of failures throughout 
the system with {\em many} agents. 
To achieve this goal, we adopt well-known 
(integral) {\em stochastic  
orders} that are partial orders 
on the set of degree distributions
\cite{MullerStoyan, SO}. They 
allow us to compare degree distributions of
interest to us with respect to degree variability
and (positive) dependence.

There are many other important properties, such as 
assortativity \cite{Newman2002, Newman2003}
and clustering \cite{CoupLelarge, 
Zhuang2016} often observed in 
social networks or engineered systems as well 
as community structure \cite{Moharrami}, which 
influence the dynamics of information or failure 
propagation. But, as it will be clear, even without
these properties, analyzing the role of degree
variability and dependence is technically challenging.
For this reason, we do not model 
them here and leave an investigation of 
their effects for a future 
study. For instance, clustering is shown 
to impede global cascades in multiplex networks
\cite{Zhuang2016}, and thus the probability 
of cascades we estimate here may serve as
an upper bound when there is clustering. However, 
we suspect that the {\em qualitative} findings 
reported in this study will continue to hold even 
in the presence of clustering.

The high-level messages of our analytical
findings can be summarized as follows.
\myskip

\noindent {\bf F1.} 
Compare two distinct systems with 
different degree distributions of agents. Then, 
when there are a small number of random failures 
in some CS, the system in which agents' degrees 
exhibit {\em higher variability} is {\em less 
likely} to suffer widespread failures.
In particular, 
suppose that agents have identical or similar 
{\em average} degrees (hence comparable levels of 
dependence among agents and CSes)
in two different systems. In this case, 
the system with more homogeneous or 
predictable degrees (thus 
less variability in agents' degrees) is more 
susceptible to extensive failures in the
system.
\myskip

\noindent {\bf F2.} Consider two systems in 
which agents' degree distributions have {\em 
identical marginal distributions}. Therefore,
loosely speaking, we can say that they 
display the same level of (inter-)dependence. 
In this case, 
the system in which the agents' degrees are more
{\em positively correlated} is {\em less likely} 
to experience cascading failures as a result of 
initially localized, random failures. 
\myskip

The first finding indicates that degree 
distributions with higher variability, such as 
power laws, which permit the existence of large 
degree hubs, are more robust to random 
failures than more concentrated distributions 
such as Poisson distributions. 
Furthermore, it hints that systems in which 
all agents in each CS have similar degrees
are most prone to an outbreak of failures.
This observation is consistent with 
earlier {\em numerical} studies 
(e.g., \cite{Cohen1, Cohen2}) that suggest
that scale-free networks with power law
degree distributions are more resilient 
to {\em random} attacks, but vulnerable to
intentional attacks that target high-degree
nodes.  

The second finding above may be somewhat 
counter-intuitive at first sight. One might 
suspect that positive correlations 
would be helpful to spreading 
failures because high-degree agents are likely 
to have even larger aggregate 
degrees with increasing positive 
correlations and thus serve as more 
effective conduits for 
transmitting failures. However, our finding 
reveals that stronger positive correlations
have similar effects as higher variability
in the first finding. We suspect that the
reason behind this is that stronger 
positive correlations increase the 
variability in the aggregate degree of
agents. Consequently, they hinder the 
proliferation of failures, rendering the 
system more robust to random failures.

A few words on notation: throughout the paper, we will
use boldface letters or symbols to denote (row) 
vectors or vector functions.\footnote{All vectors 
are assumed to be row vectors.}  For instance, $\bd$ 
denotes a vector, and the $j$-th element of $\bd$ 
is denoted by $d_j$. Vectors ${\bf 0}$ and ${\bf 1}$ 
represent the vectors of zeros and ones, 
respectively, of appropriate dimensions. 
The set $\Z_+$ (resp. $\N$) denotes the set of 
nonnegative integers $\{0, 1, 2, \ldots\}$ (resp. 
positive integers $\{1, 2, 3, \ldots\}$).  Finally, 
all vector inequalities are assumed componentwise.

\section{Related Literature}
	\label{sec:Related}
	
There is already a large volume of 
literature on related topics, including 
cascading failures and robustness of 
complex systems
\cite{Albert2000, Boguna2003a, Boguna2003b, 
Vesp}, 
spread of epidemics and efficient immunization 
\cite{Pastor2005, Schneider2011, Yagan2012}, 
and information or rumor propagation
\cite{CoupLelarge, Zhuang2016}. 
Given the significant body of studies in related
fields, it is not possible to provide a summary 
of all. For this reason, we limit our discussion 
to a short list of most pertinent studies in 
the settings of multiplex or interdependent
networks, and do not discuss
other relevant studies 
(e.g., \cite{Baxter2012, Brummitt2012, 
Goltsev2008, Lee2014, Moharrami, Son2012}), 
including many important studies on a single, monolithic
network (e.g., \cite{Blume2011, Cohen1, Cohen2, 
CoupLelarge, La_TON, MotterLai, Motter, 
Newman2002, Newman2003, Watts2002}), 
here. We instead refer an interested reader to 
the references and those therein. 

In~\cite{Buldyrev2010}, Buldyrev et al. investigated
cascades of failures in two interdependent 
networks -- networks A and B -- using numerical 
studies. In their model, 
each node in network A (resp. B) depends on a
randomly chosen node in network B (resp. A),
which is modeled using a directed support link, 
and the failure of the node on which a node
depends causes its own failure. Furthermore, 
nodes in network A (resp. B) are connected with
each other according to a degree distribution 
$P_A$ (resp. $P_B$). Initially, a fraction 
($1 - p$) of network A nodes are removed, which 
triggers failures of nodes in both networks
through connectivity and 
dependence. They studied the probability
that a giant component survives as a function of
$p$, and identified a threshold for a first-order
phase transition. In addition, their numerical
results suggest that broader degree 
distributions $P_A$ and $P_B$ make the network
more vulnerable to {\em random} failures, whereas in a
single scale-free network, the opposite has
been observed. 

The findings in \cite{Buldyrev2010} have been 
extended in a series of follow-up studies. 
Parshani et al. \cite{Parshani2010}
studied a similar model and demonstrated that, as
the coupling between the two networks diminishes, 
the phase transition changes to a second-order
transition (from a first-order transition). 
In \cite{Shao2011}, Shao et al. 
relaxed the assumption that each node is 
dependent on exactly one node in the other
network and modeled the number of support links
of nodes using random variables. Huang et al.
\cite{Huang2011} considered the robustness of
the system against targeted attacks by mapping
the problem to a previously studied problem with
{\em random} attacks, and suggested
that the presence of high degree nodes in scale-free
networks makes it challenging to protect 
interdependent networks against targeted attacks. 


In~\cite{Zhuang2016}, 
Zhuang and Ya$\breve{\rm g}$an
studied information propagation in a multiplex
network with two layers representing 
an online social network (OSN) 
and a physical network, both
of which have high clustering. Only a subset of
vertices in the physical network are assumed to 
be active in the OSN.
Their key findings are: (a) clustering consistently 
hampers cascades of information to a large number of
nodes with respect to both the critical threshold
of information epidemics and the mean size of
epidemics; and (b) information transmissibility 
(i.e., average probability of information 
transmission over a link) has significant impact;
when the transmissibility is low, 
it is easier to trigger a cascade of information 
propagation with a smaller, densely connected 
OSN than with a large, loosely connected OSN. 
However, when the transmissibility is high, the 
opposite is true. 

In another study \cite{Hu2014}, Hu et al. studied
the problem of viral influence
spreading, for instance, in adoption of 
new technologies or scientific ideas. They 
modeled the spread of adoption using a 
multiplex network in which there are two
different types of links -- (i) undirected 
connectivity links and 
(ii) directed influence links. Outgoing
influence links of a node lead to other nodes whose
adoption of a new technology or idea causes
the node to adopt it with some fixed 
probability. Similarly, incoming influence
links of a node originate from other nodes that 
watch the node to see if it adopts the technology
or idea first and, if so, follow its trend 
with fixed probability. 

Their key findings include the following: 
(a) viral cascades are feasible only if there 
are positive correlations between the connectivity 
degrees and {\em outgoing} 
influence degrees of nodes. The intuition is that
when there are positive correlations, even the
adoption of a new technology by some random node
would make it easy to {\em influence} nodes
with high connectivity degrees because they
tend to have larger outgoing influence degrees, 
hence more likely to be influenced by other 
adopters; and (b) positive correlations between 
connectivity degrees and {\em incoming} influence 
degrees do not facilitate viral cascades much. 

Khamfroush et al.~\cite{Kham2016} investigated
the propagation of phenomena (e.g., failures, 
infections or rumors) in interdependent 
networks. By introducing temporal dynamics
into the model, they studied how quickly 
phenomena spread in three different types
of networks -- scale-free networks,
small-world networks, and  
\ER networks. Two of key observations
from their simulation studies
are: (a) scale-free networks are in general more
conducive to spreading phenomena than 
the other two types; and (b) the choices
of initial spreaders can have greater impact
than the network type. Based on the latter
observation, they proposed a new centrality
metric, called path-degree centrality, 
to better identify more effective initial
spreaders.

While many of these studies 
(e.g., \cite{Buldyrev2010, Huang2011, Parshani2010, 
Shao2011}) examine the robustness 
of multiplex or interdependent networks, 
their approaches and goals are very different
from those of our study. First, most of the
aforementioned studies focus on the analysis
of the emergence or survival of giant 
components under the assumption that only
the nodes that belong to the giant component
can continue to function properly. We, on the
other hand, investigate (i) when it is
possible to see an epidemic of failures through
dependencies among different systems 
and (ii) how the
{\em underlying dependence properties} between 
systems shape the likelihood of such catastrophic 
failures. 

Second, 
although the existing studies summarized here 
(and others we are aware of) provide interesting 
observations and major contributions to the growing 
understanding of complex systems, to the best of 
our knowledge, none of these studies aims to present 
{\em analytical} findings that enable us 
to {\em compare} the robustness of different 
networks on the basis of their dependence 
properties (which can be {\em partially ordered}). 
In contrast, as summarized in 
Section~\ref{sec:Introduction}, our intent is to
take another step towards building a {\em 
comprehensive theory} on complex systems 
which will help us determine when the robustness
of such systems improves or deteriorates as a
consequence of changes in their dependence 
properties. Some of our preliminary
results are reported in~\cite{La_CISS}.

\section{System Model}
\label{sec:Model}

Let $N_S > 1$ be the number of CSes in the (global) 
system we consider. For each $i \in \cS := \{1, 2, 
\ldots, N_S\}$, let $\cA_i$ be the set of agents in 
the $i$-th CS. When convenient, we use $a^i$ 
and $a^{i,k}$ to denote a generic agent in CS $i$ 
and the $k$-th agent in $\cA_i$, respectively.

We model the {\em internal} or {\em intra-CS} 
interdependence among 
agents in the $i$-th CS using a dependence graph 
$\cG_i = (\cV_i, 
\cE_i)$: the vertices in $\cV_i$ are the 
agents in CS $i$. 
The edges in $\cE_i$ are undirected edges between 
vertices in $\cV_i$ and indicate {\em mutual} 
dependence relations between the 
end vertices.\footnote{These {\em dependence 
relations} are not necessarily
the {\em physical links} in a network. For
example, in a power system, an overload failure
in one part of power grid can cause a failure
in another part that is not geographically close
or without direct physical connection to the former.} 
An undirected edge $e \in \cE_i$ should 
be interpreted as a pair of {\em directed} edges 
pointing in the opposite directions. Two agents
with an undirected edge between them are said to be
(dependence) neighbors. 

In addition to the (undirected) edges between agents
in the same CS, we model the dependence of an 
agent in one CS on another agent in a different 
CS using a {\em directed} edge;
if there is a directed edge from agent $a^{i}$ to
agent $a^{j}$, where $j \neq i$, 
this means that $a^j$ depends on $a^i$ and,
when $a^{i}$ fails, it could
cause $a^{j}$ to crash as well. 
We do not assume that this dependence is mutual to 
allow {\em asymmetric} dependence among 
CSes.\footnote{Mutual inter-CS dependence can be 
handled by replacing the directed edges with 
undirected edges similar to those used to model 
the intra-CS interdependence, and vice versa, 
without affecting our key findings. 
Also, a similar model is employed 
in \cite{Kham2016}.} 
If there is a directed edge $a^{i} \to a^{j}$
($j \neq i$), we say that $a^{j}$ is a CS 
$j$ (dependence) neighbor of $a^{i}$ 
and that $a^{i}$ {\em supports} $a^{j}$. 

We oftentimes need to distinguish the neighbors in 
the same CS from those in other CSes. For this reason,  
we call the neighbors 
in the same CS (resp. other CSes) {\em internal} 
neighbors (resp. {\em external} neighbors). Note
that an external neighbor of an agent is another 
agent in a different CS which it supports. 
In addition, we refer to the number of internal
neighbors of an agent as its {\em internal 
degree}.

\subsection{Agent degree distributions}

Throughout the paper, we shall use ${\bf D} 
= (D_j; \ j \in \cS)$ to denote a 
$N_S$-dimensional (random)
vector that describes the number of neighbors that 
an agent has in each CS $j \in \cS$. In other
words, $D_j$, $j \in \cS$, is the number of 
CS $j$ neighbors. We call ${\bf D}$ the {\em 
(dependence) degree vector} of the agent. Note 
that, for a CS $i$ agent,  
its degrees $D_j$, $j \neq i$, 
represent the number of agents that it
supports in other CSes and hence can affect 
in case of its own 
failure, but not those that support it.
Thus, they denote the {\em outgoing}
degrees of the agent.

The degrees of a CS $i$ agent are denoted by
a random vector ${\bf D}_i = (D_{i,j}; 
\ j \in \cS)$ with a 
distribution $p_i$; given 
${\bf d} \in \Z_+^{N_S}$, the probability 
that a randomly chosen CS $i$ agent
has $d_j$ neighbors in CS $j$, $j \in \cS$, 
is equal to $p_i({\bf d})$. We find it 
convenient to define the marginal distributions
$p_{i,j}: \Z_+ \to [0, 1]$, $j \in \cS$, where
$p_{i,j}(d_j) 
= \bP{\mbox{CS $i$ agent has $d_j$ CS 
	$j$ neighbors}}$.   

Note that the degrees of an agent 
to different CSes are {\em not} assumed to be
mutually independent. Put differently, 
the number of other agents that a CS $i$
agent, say $a^i$, supports in 
different CSes, i.e., ($D_{i,j}$; $j \in 
\cS \setminus \{i\}$), 
could be {\em correlated} and depend on its
internal degree $D_{i,i}$. This 
is important because in practice the failure of an 
important agent in a system may trigger the 
failure of many
other agents across different CSes, suggesting
that the degrees of such agents could be 
correlated. Thus, we wish to study the impact
of such degree correlations on system robustness. 

Finally, throughout this paper (except for in 
Section~\ref{sec:Two}), we assume that the
internal degree of an agent is at least one with
probability one, i.e., $p_{i,i}(0) = 0$ for
all $i \in \cS$; otherwise, the agent should
not belong to CS $i$.

\subsection{Propagation of failures}

To study the robustness of a system to
failures, we need to model how a failure spreads 
from one agent to another. Here, we explain the 
models we employ to approximate the
dynamics of failure propagation both within 
a CS and between agents
in different CSes. 
\myskip

\change{\noindent {\bf {\em P1. Intra-CS failure 
propagation --}} We model failure propagation 
within CS $i$ with the
help of a function $\wp^{in}_i: \N^2 \to [0, 1]$:
for fixed $d \in \N$ and $1 \leq n_f \leq d$, 
$\wp^{in}_i(d, n_f)$ tells us the 
probability that a CS $i$ agent with an internal 
degree $d$ will fail when $n_f$ internal 
neighbors collapse. 

An example that fits this model is the 
{\em random threshold} model used by Watts in
\cite{Watts2002}, which is also used
in other studies (e.g., \cite{Blume2011, 
Brummitt2012, Kham2016}). In the Watts' model, 
every agent $a^{i,k} \in \cA_i$ has 
some intrinsic value $\xi^{i,k}$. These values
$\xi^{i,k}$ of CS $i$ agents 
are modeled using mutually independent (continuous) 
random variables (rvs) 
with some common distribution $F_i$. We refer to
$\xi^{i,k}$ as their {\em security states}. 

In his model, a CS $i$ agent, say $a^{i}$, goes 
down as a consequence of the failures
of its internal neighbors when the fraction of its 
{\em failed} internal neighbors exceeds its 
security state $\xi^i$. Therefore, 
for a given pair $(d, n_f)$ with $n_f \leq d$,  
$\wp^{in}_i(d, n_f)$ is equal to $\bP{\xi^{i} < 
n_f / d} = F_i(n_f / d)$.
}
\myskip

\change{
\noindent {\bf {\em P2. Inter-CS failure 
propagation --}} We model the propagation of
a failure from one agent to an external 
neighbor in a similar manner. 
Suppose that agent $a^{j}$ is a CS $j$ agent.
Denote the number of CS $i'$ agents
that support $a^j$ by $D^{in}_{j,i'}$. We 
call $D^{in}_{j,i'}$ the {\em incoming} CS 
$i'$ degree of $a^j$ (to distinguish it from
its outgoing degree to CS $i'$).

Although our model can be generalized to 
allow the failure probability of $a^j$ (as a 
result of failures of CS $i$ agents supporting
$a^j$) to be a function of all of its incoming 
degrees ($D^{in}_{j,i'}$; $i' \in \cS \setminus 
\{j\}$) without altering our main results, for 
the ease of exposition, here we
adopt a simpler model in which the failure 
probability only depends on $D^{in}_{j,i}$:
when $a^j$ has incoming CS $i$
degree $D^{in}_{j,i}$ and $n^i_f$ of these
supporting CS $i$ agents collapse, the 
probability that $a^j$ also fails as a result 
is given by some function 
$\wp^{ex}_{j,i}(D^{in}_{j,i}, n^i_f)$.
Also, we assume that 
the external neighbors of a failed agent go down 
with the prescribed probability 
independently of each other.

It is clear that this model is general enough
to include the one studied
in \cite{Shao2011}, where $a^j$ is unaffected
by the failure of a supporting CS $i$ agent
as long as there is another supporting CS $i$ agent
that is still operational. 
}

\subsection{Tree-like infection graphs}

\change{
For our study, we focus on scenarios where 
collapsed agents do not cause the failures
of many neighbors on the 
average.\footnote{When each failed 
agent triggers many other neighbors to crash as well, 
cascading failures are likely and should happen often. 
This may indicate that
the system is poorly designed. Instead,
we are interested in more realistic scenarios of 
interest in which cascading failures are possible
and do occur, but not too frequently.} 
To make this more precise, we introduce {\em infection 
graphs}: starting with an initial failure 
of an agent in the system, the infection
graph consists 
of all {\em failed} agents and the {\em directed} 
edges used to contribute to the failures of 
neighbors. In other words, a directed edge 
from agent $a_1$ to agent $a_2$ belongs to the 
infection graph if and only if (iff) agent $a_2$ is 
a neighbor of $a_1$ and agent $a_1$
failed before agent $a_2$ did. We assume that this 
infection graph can be approximated 
using a tree-like structure, which we call 
an {\em infection tree}. A similar 
assumption is introduced in \cite{Brummitt2012, 
La_TON, Watts2002, Yagan2012}.
}

\change{
Although this assumption may not always hold in 
real systems, it allows us to approximate the 
dynamics of spreading failures as a multi-type 
branching process (described in Section
\ref{sec:MTBP}) and to carry out an 
analytical study. Moreover, when a graph is 
sparsely connected in large networks with small
average degrees, 
with high probability there are only few 
short cycles in the giant component
\cite{Janson}. Thus, it is a reasonable
assumption when failed agents do not
cause the crash of a large number of 
neighbors in each CS, which is the scenario
of interest to us.
}

\change{
When there is a directed edge from agent $a_1$
to agent $a_2$ in an infection tree, i.e.,  
the failure of $a_1$ causes that of $a_2$, 
we refer to $a_1$ (resp. $a_2$) as the {\em parent} 
(resp. a {\em child}). Also, borrowing from the 
language of epidemiology, we say that the parent 
{\em infected} the child.
}

\section{Agent Types and Children Distributions}

As mentioned in Section~\ref{sec:Introduction}, we
are interested in scenarios where the number of 
agents in each CS
is large. In a system with many agents,  
the propagation of failures can be approximated
with the help of a multi-type branching process
under some simplifying assumptions. 
To be more precise, we shall borrow from the theory
of branching processes with finitely many types
in order to study the likelihood of a small number
of initial failures leading to an epidemic
of failures infecting many other agents.




\subsection{Agent types}

In our model, depending on how an agent is infected, 
there are two possible types we need to consider 
for the failed agent. To formalize this, 
we define the {\em types} of agents in varying
CSes. Given $N_S$ CSes, 
there are $2 N_S$ types of interest to us. A 
CS $i$ agent ($i \in \cS$) can be either type 
$i$ or $i + N_S$: 
a type $i$ agent is a CS $i$ agent whose
internal neighbors are all 
functional, i.e., have not crashed. 
On the other hand, a CS $i$ agent is of type $i+N_S$ 
if it has an internal neighbor that 
went down. For notational simplicity, 
we use $i^+$ to denote $i + N_S$ ($i \in \cS$)
hereafter.

We shall discuss the distribution of the number of 
children of various types which are produced
by a failed agent of type $i \in \{1, 2, 
\ldots, 2 N_S\} =: \cS^+$ shortly. To explain
these children (vector) distributions, we first 
need to describe how we approximate the probability 
that a neighbor of a failed agent falls victim
to infection.

\subsection{Infection probability of neighbors}
	\label{subsec:InfProb}
	
\change{
The assumption that the infection graph is 
tree-like has an important implication: 
except for the root of the infection graph, 
each failed agent has exactly one
supporting agent whose collapse led to 
its own failure, namely its parent. 
This observation helps us simplify the models
used to capture the propagation of failures
as follows.

Because an agent faces possible infection from 
at most one failed supporting agent under 
the assumption, we can 
approximate the probability that a neighbor 
of a failed agent, say $a^i$, 
will be infected as explained below.

{\bf $\bullet$ Intra-CS infection probability --}
Consider an internal neighbor, $\tilde{a}^i$,
of agent $a^i$. 
Following the explanation in \cite{La_TON, 
Watts2002}, the probability that 
agent $\tilde{a}^i$ has $d$ internal neighbors 
is proportional to $d \cdot p_{i,i}(d)$. 
Recall that $\tilde{a}^i$ with internal degree $d$
will be infected by the failure of $a^i$ with 
probability $\wp^{in}_i(d, 1)$. Thus, by conditioning
on the internal degree of $\tilde{a}^i$, 
the probability that an 
internal neighbor of a failed CS $i$ agent
will be infected can be approximated using 
\beqa
\myb q_{i,i}
\mydef  \sum_{d \in \N} 
	\frac{d \cdot p_{i,i}(d)}{d_{i, \avg}} 
		\wp^{in}_i(d, 1)
	= \sum_{d \in \N} w_{i,i}(d) \ \wp^{in}_i(d, 1), 
	\label{eq:qii-1}
\eeqa
where $d_{i, \avg} := \sum_{d \in \N} d \cdot p_{i,i} (d)$ 
is the average internal degree of CS $i$ agents, 
and $w_{i,i}(d) := d \cdot p_{i,i}(d) / d_{i, \avg}$. 
In the example of Watts' model, we have 
$q_{i,i} = \sum_{d \in \N} w_{i,i}(d) \ F_i(d^{-1})$. 

The average internal degree $d_{i,\avg}$ in the 
denominator of (\ref{eq:qii-1}) serves as
a normalizing constant so that ${\bf w}_{i,i}
:= (w_{i,i}(d); \ d \in \N)$ is the internal degree
distribution of a randomly picked internal neighbor
\cite{La_TON, Watts2002}.\footnote{This 
sampling technique is called {\em sampling by random 
edge selection} \cite{LeskovecFaloutsos}.}
}

\change{
{\bf $\bullet$ Inter-CS infection probability --}
Suppose that a CS $j$ agent, $a^j$, is an external
neighbor of a failed CS $i$ agent $a^i$. Recall
that the inter-CS dependence is asymmetric.
Let $p^{in}_{j,i}$ be the conditional distribution 
of $D^{in}_{j,i}$ given that $a^j$ is a CS $j$
neighbor of a CS $i$ agent. 
Then, we approximate the probability that 
$a^j$ will be infected by the failure of $a^i$ 
using $q_{i,j} := \sum_{d \in \N} p^{in}_{j,i}(d) \ 
\wp^{ex}_{j,i}(d, 1)$.

Consider the model studied in \cite{Shao2011}
where the agent $a^j$ survives as long as one other 
supporting CS $i$ agent is functional. 
In this case, agent $a^j$ will be infected by 
the failure of $a^i$ iff $D^{in}_{j, i}
= 1$. In other words, $\wp^{ex}_{j,i}(1, 1) = 1$ and
$\wp^{ex}_{j,i}(d, 1) = 0$ for all $d > 1$. 
Thus, $q_{i,j} = p^{in}_{j,i}(1)$. 

The conditional distribution $p^{in}_{j,i}$
is determined by the distribution of {\em incoming} 
degrees $D^{in}_{j,i}$ of CS $j$ agents. In our study, 
we assume that the incoming degree distributions, 
hence their conditional distributions and 
$q_{i,j}$, $j \neq i$, are fixed while
we study the influence of the variability and dependence 
of internal and {\em outgoing} degrees of agents. 
}
\myskip

We believe that many of system parameters used in the
model can be estimated in practice, for instance, 
from historical data, physical 
laws (e.g., power grid), or simulation studies. These
include degree distributions $( p_i; \ i \in \cS)$ 
and infection probabilities $(q_{i,j}; \ i,j \in \cS)$, 
and can be used to examine the robustness of the system. 

\subsection{Distributions of children vectors}

Let $a^i$ be a randomly picked CS $i$ agent, 
and assume that it crashes
and infects some of its neighbors. 
Then, the type of its child
belongs to $(\cS \cup \{i^+\}) \setminus \{i\}
=: \cS^+_i$:
if $a^i$ triggers the failure of an external
neighbor in CS $j$, $j \neq i$, the type of the 
child is simply $j$. If $a^i$ causes an
internal neighbor to fail, then the child's 
type is $i^+$ because agent $a^i$ is an {\em 
infected} internal
neighbor of the child. 

Based on this observation, we can approximate 
the distribution of the number of children 
produced by $a^i$ by considering its two possible
types.

{\bf C1. Type $i$ agent $a^i$ ($i \in \cS$) --}
Let $D_{i,i}$ be its internal 
degree. Some of these internal neighbors, 
however, may not be affected by the failure
of agent $a^i$ and remain uninfected. 
Similarly, an external 
neighbor of agent $a^i$ in CS $j$ will 
go down (as a consequence of 
$a^i$'s failure) with probability $q_{i,j}$. 
As a result, the actual number of CS $j$ neighbors
infected by $a^i$, which we 
denote by $O_{i,j}$, can differ from $D_{i,j}$, 
$j \in \cS \setminus \{i\}$.



Summarizing this argument, 
we approximate 
%
the probability that $a^i$ 
will produce children given by a children
vector ${\bf o} = (o_j; \ j \in \cS^+) \in 
\Z_+^{2 N_S}$, where $o_j$ is the number of 
type $j$ children, using the following 
children distribution:
\beqa
&& \myhb h_i({\bf o}) 
	\label{eq:hi1} \\
\myeq \left\{ \begin{array}{l}
	\sum_{{\bf d} \geq \bar{\bd}({\bf o})} 
	\Big( p_{i}(\bd) \\
	\myhf \times \prod_{j \in \cS} 
		{{d_j}\choose{\bar{d}_j(\bo)}} 
			\left( q_{i,j} \right)^{\bar{d}_j(\bo)}
			\left( \bar{q}_{i,j} \right)^{d_j 
				- \bar{d}_j(\bo)} \Big) \lb
	\hspace{0.15in} \mbox{ if } o_j = 0
	\mbox{ for all } j \notin \cS^+_i, \\
	0 \hspace{0.12in} \mbox{otherwise,}
	\end{array} \right.
	\nonumber
\eeqa
where $q_{i,j}$ are as defined earlier,
$\bar{q}_{i,j} := 1 - q_{i,j}$ for all $j 
\in \cS$, and $\bar{\bd} : \Z_+^{2 N_S} \to 
\Z_+^{N_S}$ with 
\beqa
\bar{d}_j({\bf o})
\myeq \left\{ \begin{array}{ll}
	o_j & \mbox{if } j \neq i, \\
	o_{i^+} & \mbox{if } j = i.
	\end{array} \right.  
	\label{eq:bard}
\eeqa

{\bf C2. Type $i^+$ agent $a^i$ --}
For type $i^+$ agents ($i \in \cS$), the 
children distribution is closely related to that of 
type $i$ agents with a minor difference: for ${\bf o}
\in \Z_+^{2 N_S}$,
\beqa
&& \myhb h_{i^+}({\bf o}) 
	\label{eq:hi2} \\
\myeq \left\{ \begin{array}{l}
	\sum_{\bd \geq \bar{\bd}({\bf o})} 
	\Big( p_{i}(\bd + \be_i) \\
	\myhf \times \prod_{j \in \cS} 
		{{d_j}\choose{\bar{d}_j(\bo)}} 
			\left( q_{i,j} \right)^{\bar{d}_j(\bo)}
			\left( \bar{q}_{i,j} \right)^{d_j 
				- \bar{d}_j(\bo)} \Big) \lb
	\hspace{0.15in} \mbox{ if } o_j = 0
	\mbox{ for all } j \notin \cS^+_i, \\
	0 \hspace{0.12in} \mbox{otherwise,}
	\end{array} \right.
	\nonumber
\eeqa
where ${\bf e}_i$ is a zero-one vector whose only nonzero
entry is the $i$-th entry.

The only difference between (\ref{eq:hi1})
and (\ref{eq:hi2}) is that, for type $i^+$ agents, 
we first remove one of
internal neighbors before counting the
neighbors that can be infected by the agents. 
The reason for this is that a type $i^+$ 
agent has a parent in CS $i$ and the number of 
{\em remaining} internal neighbors that it can 
potentially infect is its internal degree minus one.
We denote the marginal distribution of the number 
of type $j$ children of a type $i$ agent by $h_{i,j}$, 
$i, j \in \cS^+$. 
\myskip

Before we proceed, let us comment on a simplifying 
assumption we implicitly introduced in 
Section~\ref{subsec:InfProb} and our 
approximations in (\ref{eq:hi1}) and (\ref{eq:hi2}).  
Suppose that $a^i$ is a CS $i$ agent that is 
infected by an internal neighbor. Given
that $a^i$'s failure is caused by another
internal neighbor, its conditional degree distribution 
will likely be different from $p_i$ we assumed in
(\ref{eq:hi1}) and (\ref{eq:hi2}). 

For instance, if $d \cdot \wp^{in}_i(d, 1)$ is increasing
(resp. decreasing), its internal degree $D_{i,i}$
{\em conditional} on the event that it is 
infected by an internal neighbor, is larger 
(resp. smaller) than the unconditional internal 
degree with respect to the usual 
stochastic order~\cite{SO}. The reason for this is
that the internal degree distribution of an internal
neighbor of $a^i$ is given by ${\bf w}_{i,i}$, 
where $w_{i,i}(d) \propto d \cdot p_{i,i}(d)$.
Therefore, if $d \cdot \wp^{in}_i(d, 1)$ is increasing
in $d$, we have 
\beqan
\frac{w_{i,i}(d) \wp^{in}_i(d, 1)}{p_{i,i}(d)}
\leq \frac{w_{i,i}(d+1) \wp^{in}_i(d+1, 1)}{p_{i,i}(d+1)}
\ \mbox{ for all } d \in \N.
\eeqan 

Moreover, this conditional degree distribution could 
differ from the conditional degree distribution
we would see provided that $a^i$ is infected by a 
parent in a different CS, which would also depend
on the CS to which the parent belongs. 
Therefore, it is clear that 
the conditional degree distribution of an infected 
agent will likely differ from $p_i$ and depend on 
how it was infected. 

Unfortunately, computing and adopting accurate 
conditional degree distributions for the 
analysis is quite challenging for several 
reasons. For example, in order to compute the 
necessary conditional probabilities, 
for each failed agent $a^i$, 
we need to know exactly how it was 
infected. More precisely, 
we must take into account the {\em history} 
or the sequence
of agents that collapsed and led to the
infection of $a^i$, as well as the 
joint distributions of agents' incoming
and outgoing degrees.
The reason for this is that the 
conditional degree distribution 
of the {\em parent} of $a^i$ in turn depends on 
its parent and so on. 

Iteratively computing
the conditional degree distributions of
{\em all} infected agents while accounting for 
the history and relevant joint degree
distributions quickly becomes intractable.
For this reason, in order to maintain
mathematical tractability of the model, we make 
a simplifying assumption that the 
(conditional) degree
distribution of {\em infected} CS $i$
agents can be approximated by $p_i$ for
all $i \in \cS$. However, we believe that
this is a reasonable assumption, especially
when we compare two systems with similar 
degree distributions for {\em local comparison} 
(as the assumption or its failure would affect 
both of them alike).

\section{Multi-type Branching Process for Modeling
	the Spread of Failures} 
	\label{sec:MTBP}
	
We approximate the propagation of failures, using 
a multi-type branching process: when a type $i$
agent ($i \in \cS^+$) fails, it produces 
children of various types in accordance with the
children distribution described in 
the previous section, {\em independently of other 
infected agents in the system}. 

\subsection{Infection tree}

Suppose that a CS $i$ agent ($i \in \cS$), 
say $a^i$, is the first agent to experience
a random failure. 
As mentioned in Section~\ref{sec:Introduction},
we are
interested in determining: (i) if it is possible 
for this single failure to lead to widespread 
infection of many other agents through 
dependence among agents in multiple
CSes, and (ii) if so, how likely it is for the
system to suffer such a cascade of failures.

To answer these questions, we consider a
(directed) infection tree that captures 
the spread of failures, which is rooted at 
agent $a^i$. We denote the tree by $\cT
:= (\cV_{\cT}, \cE_{\cT})$, where $\cV_\cT$ is 
the set of failed agents and $\cE_\cT$ is
the set of directed edges via which failures 
transmitted. 
For each $k \in \Z_+$, let $\cN(k)
= (\cN_j(k) ; \ j \in \cS^+)$ 
denote the set of $k$-hop neighbors of 
$a^i$ in $\cT$, where 
$\cN_j(k)$ is the set of type $j$ 
$k$-hop neighbors. 

\begin{figure}[h]
\centering
\includegraphics[width=0.4\textwidth]{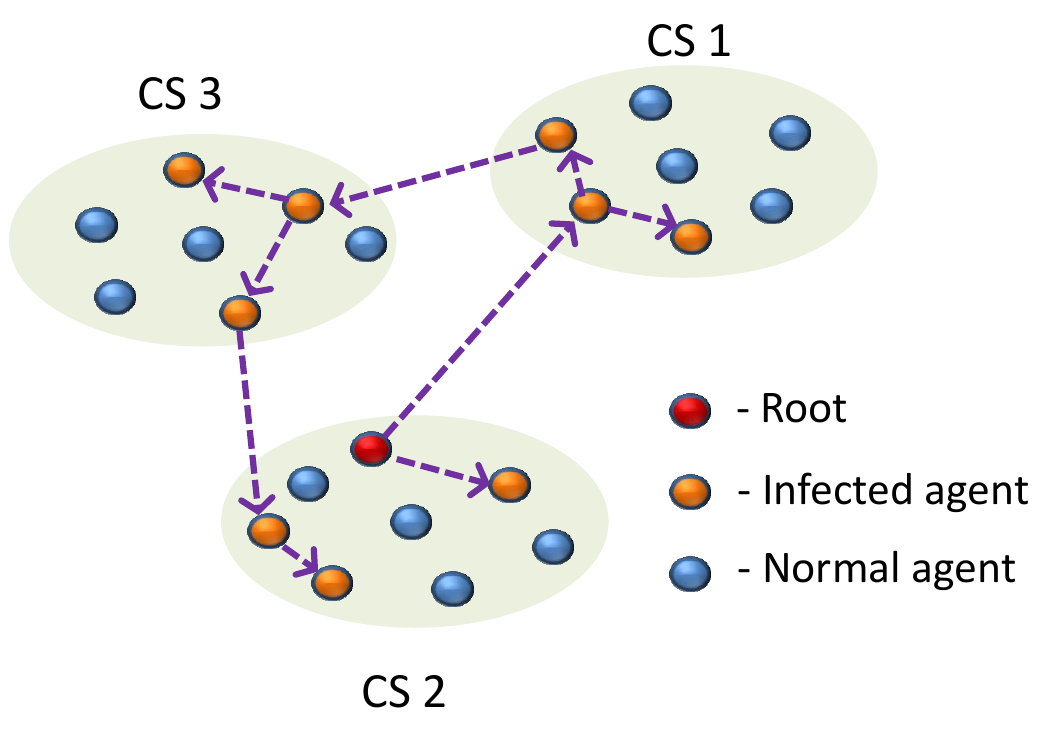}
\caption{Propagation of failures.}
\label{fig:tree}
\end{figure}

Note that $\cN(k)$, $k \in \N$, are random sets, 
and we are mostly interested in the cardinalities 
$N_j(k) = | \cN_j(k) |$.
In the example with three CSes ($N_S = 3$)
shown in Fig.~\ref{fig:tree}, 
the initial failure occurs in CS 2 (Root 
shown as a filled red circle). The dotted
arrows indicate how the failures transmitted between
agents. Here, $N_j(1) = 1$ for $j \in \{1,5\}$
and  $N_j(1) = 0$ for $j \in \{2, 3, 4, 6\}$. 
Similarly, $N_4(2) = 2$ and $N_j(2) = 0$
for $j \in \{1, 2, 3, 5, 6\}$. The tree ${\cal T}$
consists of the root and infected agents
(filled orange circles) along with the
dotted arrows. 
 
Regrettably, computing the exact distribution 
of the total number of failed agents (i.e., 
$1 + \sum_{k \in \N} \left( \sum_{j \in \cS^+}
N_j(k) \right)$) is 
challenging, if possible at all, for large systems. 
For this reason, we follow a similar approach
employed in \cite{Brummitt2012, La_TON, Watts2002,
Yagan2012} and, rather than analyzing a  
finite system, consider an infinite system 
in which the degree vector of {\em each} CS $i$ 
agent is given by a random vector with a common
distribution $p_i$, {\em independently of each other}. 
In other words, the degree
vectors of CS $i$ agents are given 
by independent and identically distributed (i.i.d.)
random vectors with the distribution $p_i$. 
Moreover, the degree vectors of agents in different
CSes are assumed mutually independent. This 
degree-based model is also known as the
Chung-Lu model in the literature~\cite{ChungLu}. 

By the strong law of large numbers, the fraction 
of CS $i$ agents with degree vector $\bd$ converges 
to $p_i(\bd)$ almost surely for all $\bd \in 
\Z_+^{N_S}$. Using this model, we will first look 
for a condition under which $\bP{
\limsup_{k \to \infty} N_j(k) = 0 \mbox{ for
all } j \in \cS^+} < 1$.
Put differently, there is {\em positive}
probability that the failures will continue to
propagate forever 
in an infinite system. We shall use 
this probability of cascading failures 
(PoCF) in an
infinite system to approximate the probability
that a large system would experience an epidemic
of failures. 

The answer to this question can be obtained by
studying a multi-type branching process with 
$2 N_S$ types. Let ${\bf N} = \{{\bf N}(k), \ 
k \in \Z_+\}$, where ${\bf N}(k) = (N_i(k); \ 
i \in \cS^+)$
and $N_i(k)$, $i \in \cS^+$, is the number of 
type $i$ agents in the $k$-th generation. 
Recall that, for $i \in \cS$, 
a type $i$ agent (resp. $i^+$ agent) 
is a CS $i$ agent with no infected internal 
neighbor (resp. with a failed internal
neighbor). 


\subsection{Probability of extinction}

The probability $\bP{ \limsup_{k \to \infty} 
N_j(k) = 0 \mbox{ for all } j \in \cS^+}$ is 
called the probability of extinction (PoE)
\cite{Harris}. Obviously, the PoCF 
is equal to one minus the 
PoE. Since the initial failure
can originate in any of $N_S$ CSes, we denote 
the PoE starting with a random failure in CS 
$i$ by $\mu_i$.

For each $i \in \cS^+$, let
$\E{h_i} = (\E{h_{i,j}}; \ j \in \cS^+)$
be a $1 \times 2 N_S$ row 
vector,
whose $j$-th element is the expected number of
type $j$ children from a failed type 
$i$ agent. Define $\bM = [M_{i,j}]$ to be 
a $(2 N_S) \times (2 N_S)$ matrix, whose
$i$-th row is $\E{h_i}$, i.e., 
$M_{i,j} = \E{h_{i,j}}$ for all $i, j
\in \cS^+$. 

An example of $\bM$ with $N_S = 2$ is shown below. 
\beqan
\bM
\myeq \left[ 
	\begin{array}{cccc}
	0 & M_{1,2} & M_{1,3} & 0 \\
	M_{2,1} & 0 & 0 & M_{2,4} \\
	0 & M_{3,2} & M_{3,3} & 0 \\
	M_{4,1} & 0 & 0 & M_{4,4} \\
	\end{array}
\right]
\eeqan
Note that $M_{i,i} = M_{i^+, i} = 0$ for all 
$i \in \cS$ because a CS $i$ agent infected 
by another CS $i$ agent will be type 
$i^+$ ($i \in \cS$). 
Similarly, $M_{i, j^+} = 0$ for all $i, j \in 
\cS$ and $j \neq i$ under the assumption
of no cycles in the infection tree.  

\begin{defn}	\label{def:pos-regular}
A square matrix ${\bf A}$ is said to be 
{\em (positively) regular} if there exists
$k \in \N$ such that ${\bf A}^k$ is
positive, i.e., all entries are positive. 
\end{defn}

\begin{assm}	\label{assm:pos-regular}
We assume that $\bM$ is (positively) regular.
\myskip
\end{assm}

One can show that a {\em sufficient} condition 
for the (positive)
regularity of $\bM$ is that (i) $p_{i,i}(0) + 
p_{i,i}(1) < 1$
for all $i \in \cS$ and (ii) $\bM$ is 
irreducible.\footnote{Let ${\bf A} = 
[A_{i,j}]$ be a $2 N_S \times 2 N_S$ matrix, 
where $A_{i,j} = \indicate{M_{i,j} > 0}$ for 
all $i, j \in \cS^+$. This matrix ${\bf A}$ 
can be viewed as an adjacency matrix for a 
directed graph ${\cal G}$ with $2 N_S$ 
vertices that represent the $2 N_S$ types. 
There is a directed edge from vertex $i$
to vertex $j$ if a type $i$ agent produces
a type $j$ agent with positive probability. 
The matrix $\bM$ is irreducible if
and only if the directed graph 
${\cal G}$ is {\em strongly connected}
\cite{AGT}.} 
The first condition simply
means that when a CS agent $i$ fails, 
there is positive probability that it will
infect another internal neighbor. 
The second condition ensures that a random 
initial failure
in {\em any} CS can eventually cause
some agents in every other CS $j$, $j \neq
i$, to go down with positive probability,
following a sequence of infections. 

If $\bM$ is regular, there exists
$k^\star \in \N$ such that, starting with any 
random failure, regardless of which CS
experiences the initial failure, there is
strictly positive probability that 
$\bN(k^\star) \geq {\bf 1}$, i.e., there is
a failure in {\em every} CS. 

Let $\boldsymbol{\mu} = (\mu_i; \ i \in 
\cS^+)$. Although 
the PoEs of interest to us are ($\mu_i$; 
$i \in \cS$), we will compute ($\mu_{i^+}$;
$i \in \cS$) as well.
Under the (positive) regularity assumption, 
$\boldsymbol{\mu} = {\bf 1}$ if (i) $\rho(\bM)
< 1$ or (ii) $\rho(\bM) = 1$ and there is
at least one type for which the probability 
that it produces 
exactly one child is not equal to one, 
where $\rho(\bM)$ is the spectral
radius of $\bM$~\cite{matrix}. Similarly, 
if $\rho(\bM) > 1$, then $\boldsymbol{\mu}
< {\bf 1}$ and there is strictly positive
probability that the cascading failures  
continue forever in an infinite system, suggesting
that there could be an epidemic of failures
in a large system.

It is noteworthy that whether or not there
could be cascading failures in the infinite
system depends only on the 
mean number of children of varying types
that each type produces.
However, the exact
PoEs $\bmu$ vary from one set of children
distributions $\{h_i; i \in \cS^+\}$ to another 
set $\{\tilde{h}_i; i \in \cS^+\}$
even when the matrix $\bM$ remains the same.

For each $i \in \cS^+$, define a generating
function $f_i: [0, 1]^{2 N_S} \to [0, 1]$, 
where 
\beqan
f_i(\bs) = \sum_{\bo \in \Z_+^{2 N_S}} 
	h_i(\bo) \prod_{j \in \cS^+} s_j^{o_j}.
	\label{eq:fi}
\eeqan
Then, the PoE vector $\boldsymbol{\mu}$ is given as a 
fixed point that satisfies 
\beqa
{\bf f}(\bmu)
\myeq \bmu \leq {\bf 1}, 
	\label{eq:fixed}
\eeqa
where ${\bf f}(\bs) = (f_1(\bs), \ldots, f_{2N_S}(\bs))$.
When $\rho(\bM) > 1$, there exists a unique $\bmu$
that satisfies (\ref{eq:fixed}) with strict
inequality, i.e., $\bmu < {\bf 1}$
\cite{Harris}. 

A key question of interest to us is how the degree 
distributions ($p_i$; $i \in \cS$) affect the PoE 
vector $\boldsymbol{\mu}$, especially with 
{\em fixed} $\bM$. To be more precise, we
will investigate how the {\em variability} and 
{\em dependence structure} of agents' degree vectors
shape the PoEs.  
To this end, we introduce
several stochastic and dependence orders that we
employ to compare the degree distributions. 
Using these orders, we first examine a simple
scenario consisting of two {\em symmetric} 
interdependent CSes in the subsequent section, 
followed by more general 
settings in Section~\ref{sec:General}.

\section{Two symmetric interdependent systems}
	\label{sec:Two}

The goal of this section is, by studying 
simpler scenarios first, to highlight 
some insights on how (i) the variability of 
degrees of agents
(i.e., the number of neighbors in two different
CSes) and (ii) the dependence of the two degrees
influence the PoEs, even when the mean degrees
remain fixed. 

Consider a system comprising two interdependent 
CSes ($N_S = 2$), 
and suppose that the degree distributions 
$p_1$ and $p_2$ are symmetric, i.e., 
$p_1(d_1, d_2) = p_2(d_2, d_1)$ for 
all $(d_1, d_2) \in \Z_+^2$. Moreover, in order to 
simplify the analysis and shed some light on 
our main findings in general settings to follow, 
we set $q_{i,j} = 1$ for all 
$i, j \in \cS$. This assumption will be relaxed 
in the subsequent section. To 
correctly interpret this assumption 
and the findings in this section, 
a reader should view the symmetric degree 
distributions $p_i$, $i = 1, 2,$ as children 
distributions $h_i$, $i = 1, 2,$ instead; 
otherwise, if all internal dependence graphs 
${\cal G}_i$, $i \in \cS$, are connected and 
$q_{i,j} = 1$ for all $i,j \in \cS$, 
under positive regularity assumption of $\bM$, 
every agent will eventually be infected, 
starting with any failure. For this reason, 
we remove the assumption that the internal
degree is at least one in this section
to allow for the possibility that some agents
do not produce any children. This assumption
will be reintroduced in the following section.

Throughout this and following sections, 
we assume that $\rho(\bM) > 1$
under all considered degree distributions
so that it is possible for a random failure 
to trigger cascading failures. 
In the case of two interdependent CSes, 
the assumed symmetry of the degree 
distributions and the uniqueness of the 
fixed point $\bmu$ satisfying $\bff(\bmu) = 
\bmu < {\bf 1}$
tell us $\mu_1 = \mu_2$ and $\mu_3 = \mu_4$.

\subsection{The effects of degree variability}

We first study the variability of degrees. One 
common way to compare the variability of two 
rvs is the {\em second-order stochastic
dominance} (SSD)~\cite{SO}. Loosely speaking, 
if rv $Y$ dominates rv $X$ with respect to
SSD ($X \leq_{SSD} Y$) and $\E{X} = \E{Y}$, 
it means that $Y$ is more {\em predictable}
than $X$. It turns out $X
\leq_{SSD} Y$ is equivalent to 
$X$ being smaller than $Y$ with respect to 
{\em increasing concave} (ICV) order 
($X \leq_{icv} Y$);
for all increasing, concave
functions $\xi: \R \to \R$, $\E{\xi(X)}
\leq \E{\xi(Y)}$.\footnote{These inequalities
in the definitions of various stochastic
and dependence orders are required to hold 
only for the functions
for which the expectations are well defined.} 
Since $\xi(x) = x$ is concave and
increasing, $X \leq_{SSD} Y$ implies $\E{X} 
\leq \E{Y}$.

In order to eliminate the effects of the 
correlations between two degrees 
and focus on the role of their 
variability on PoEs, we assume that the two
degrees of an agent are independent in this
subsection. 

\begin{assm}	\label{assm:independence}
$p_{1}(d_1, d_2) = p_{1,1}(d_1) \cdot
p_{1,2}(d_2)$ for all $d_1, d_2 \in \Z_+$. 
\myskip
\end{assm}

The following lemma illustrates how the {\em variability} 
in degrees affects the PoEs when the degrees of an 
agent are independent.

\begin{lemma}	\label{lemma:L1}
Consider two degree distributions $p^{(\ell)}_1$, 
$\ell = 1, 2$.
Let ${\bf D}^{(\ell)}_1$, 
$\ell = 1, 2,$ be a random vector with 
distribution $p^{(\ell)}_1$. 
Suppose that Assumption
\ref{assm:independence} holds for $p_1^{(\ell)}$, 
$\ell = 1, 2$, and $D^{(1)}_{1, j} \leq_{SSD}
D^{(2)}_{1, j}$ for $j = 1, 2$. 
Then, $\bmu^{2} \leq \bmu^{1}$, where 
$\bmu^{\ell} < {\bf 1}$, $\ell = 1, 2$, is the PoE
vector under degree distribution $p^{(\ell)}_1$.
\end{lemma}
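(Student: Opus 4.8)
The plan is to characterize each extinction vector $\bmu^{(\ell)}$, $\ell=1,2$, as the \emph{minimal} fixed point in $[0,1]^{4}$ of the offspring generating map $\bff^{(\ell)}$ and to compare the two minimal fixed points by monotonicity. Recall that $\bmu^{(\ell)}=\lim_{n}(\bff^{(\ell)})^{n}({\bf 0})$, the increasing limit of the iterates started at ${\bf 0}$, and that each coordinate $f_i^{(\ell)}$ is a probability generating function, hence nondecreasing in every argument on $[0,1]^4$. By the assumed symmetry I write $\bmu^{(\ell)}=(\alpha^{(\ell)},\alpha^{(\ell)},\beta^{(\ell)},\beta^{(\ell)})$ with $\alpha^{(\ell)}=\mu_1^{(\ell)}$ and $\beta^{(\ell)}=\mu_3^{(\ell)}$, so it suffices to compare the two scalars $\alpha^{(\ell)}$ (and $\beta^{(\ell)}$). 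Substituting the symmetric ansatz into $\bff^{(\ell)}(\bs)=\bs$ reduces the problem to a two–dimensional system of the schematic form $\alpha=h_1(\beta)\,g_2(\alpha)$ and $\beta=\tilde h_1(\beta)\,g_2(\alpha)$, where $g_2$ is the external (CS-2) factor and $h_1,\tilde h_1$ are the within-CS (type-$1$ and type-$1^+$) factors.

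The engine is the passage from the SSD hypothesis to a pointwise ordering of generating functions. For fixed $s\in[0,1]$ the map $x\mapsto s^{x}$ is decreasing and convex, so $x\mapsto -s^{x}$ is increasing and concave; hence $D_{1,j}^{(1)}\leq_{icv}D_{1,j}^{(2)}$ yields $\E{s^{D_{1,j}^{(1)}}}\ge\E{s^{D_{1,j}^{(2)}}}$, i.e. the marginal generating functions obey $g_{1,j}^{(1)}\ge g_{1,j}^{(2)}$ on $[0,1]$ for $j=1,2$. Under Assumption~\ref{assm:independence} the joint generating functions factor into these marginals, so the external factor $g_2$ and the ``outgoing'' type-$1$ internal factor are immediately ordered in the extinction-promoting direction ($^{(1)}\ge{}^{(2)}$).

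The principal difficulty is the type-$1^+$ factor $\tilde h_1$, which governs within-CS propagation. Here the naive bound goes the \emph{wrong} way: a type-$1^+$ agent is reached along an internal edge, so its relevant internal degree is size-biased/shifted by one, and for such quantities the $\leq_{icv}$ order \emph{reverses}, giving $\tilde h_1^{(1)}\le\tilde h_1^{(2)}$. Consequently $\bff^{(1)}$ and $\bff^{(2)}$ are not ordered coordinatewise, and a direct monotone-iteration coupling cannot succeed — this is exactly the mixed-monotonicity phenomenon I expect to be the crux. The resolution I would pursue is to show that higher degree variability simultaneously \emph{lowers the effective within-CS transmissibility}: concretely, that the internal infection probability $q_{1,1}$ of (\ref{eq:qii-1}) satisfies $q_{1,1}^{(1)}\le q_{1,1}^{(2)}$. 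This follows by applying the $\leq_{icv}$ hypothesis to the degree-weighted vulnerability $d\mapsto d\,\phi_1(d)$, which is increasing and concave once the decreasing vulnerability $\phi_1$ is accounted for. With $q_{1,1}^{(1)}\le q_{1,1}^{(2)}$ the argument fed into $g_1$ obeys $q_{1,1}^{(1)}s+(1-q_{1,1}^{(1)})\ge q_{1,1}^{(2)}s+(1-q_{1,1}^{(2)})$, and combining this with $g_{1,1}^{(1)}\ge g_{1,1}^{(2)}$ and the monotonicity of generating functions restores the correct ordering of the type-$1$ factor $h_1$.

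Finally I would eliminate the auxiliary variable $\beta$: solve the second reduced equation for its minimal root $\beta=B(\alpha)$, an increasing function of $\alpha$, and substitute into the first to obtain a scalar fixed-point equation $\alpha=\Phi(\alpha)$ with $\Phi$ nondecreasing. The main obstacle is then to prove $\Phi^{(1)}\ge\Phi^{(2)}$ pointwise on $[0,1]$ \emph{despite} the anti-ordering of $\tilde h_1$ — that is, to show that the extinction-promoting movements of $g_2$, $h_1$ and the drop in $q_{1,1}$ dominate the reversed feedback entering through $B(\alpha)$. Granting this domination, the minimal fixed points of the two scalar maps satisfy $\alpha^{(2)}\le\alpha^{(1)}$, and the same comparison propagates to $\beta$, yielding $\bmu^{(2)}\le\bmu^{(1)}$. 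I expect the hard part to be precisely this competition between the correctly-ordered external/outgoing factors and the reversed within-CS chain, with the decreasing vulnerability $\phi_1$ (equivalently, the concavity of $d\,\phi_1(d)$) being exactly what tips the balance the right way.
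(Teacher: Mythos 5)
Your opening moves --- characterize $\bmu^{\ell}$ as the minimal fixed point of the offspring generating map, exploit the symmetry $\mu_1=\mu_2$, $\mu_3=\mu_4$, factor the generating functions using Assumption~\ref{assm:independence}, and order the marginal factors by noting that $-s^{x}$ is increasing and concave so SSD gives $\sum_d p^{(2)}_{1,j}(d)s^{d}\le\sum_d p^{(1)}_{1,j}(d)s^{d}$ --- are exactly the engine of the paper's proof. But from there you diverge onto a route that both imports hypotheses the lemma does not have and leaves the decisive step unproven. Lemma~\ref{lemma:L1} lives in the simplified setting of Section~\ref{sec:Two}: $q_{i,j}=1$ for all $i,j$ and the symmetric distributions are read directly as children distributions, so there is no size-biased neighbor sampling, no vulnerability function $\phi_1$, and no internal infection probability of the form (\ref{eq:qii-1}) anywhere in this lemma. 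Your proposed repair via ``$q_{1,1}^{(1)}\le q_{1,1}^{(2)}$ from concavity of $d\,\phi_1(d)$'' invokes Assumption~\ref{assm:phi}, which is introduced only for the general-setting results (it is the key to Lemma~\ref{lemma:hidcv} behind Theorem~\ref{thm:IDCV}) and is not among the hypotheses of Lemma~\ref{lemma:L1}; a proof that needs it proves a different statement. Likewise, the reduction to a scalar equation $\alpha=\Phi(\alpha)$ and the claim that the correctly ordered factors ``dominate'' the type-$1^{+}$ feedback is precisely the part you do not carry out (``Granting this domination\ldots''), so the proposal stops short of a proof at its own crux.

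The paper needs no such domination argument: it shows $f^{2}_1(\bmu^{1})\le f^{1}_1(\bmu^{1})$ \emph{and} $f^{2}_3(\bmu^{1})\le f^{1}_3(\bmu^{1})$ by the same ICV computation --- the type-$1^{+}$ generating function is $f_3(\bs)=\sum_{\bd}p_1(\bd+\be_1)s_3^{d_1}s_2^{d_2}$, whose internal factor is $\E{\beta^{D_{1,1}-1}}$ up to the treatment of $d_1=0$, and $-\beta^{x-1}$ is again increasing and concave --- and then concludes $\bff^{2}(\bmu^{1})\le\bmu^{1}$, hence $\bmu^{2}\le\bmu^{1}$, from Corollary~2 of \cite{Harris}. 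Your instinct that the $1^{+}$ factor is the delicate point is not baseless: because Section~\ref{sec:Two} drops the standing assumption $p_{i,i}(0)=0$, the shifted sum $\sum_{d_1}p_{1,1}(d_1+1)\beta^{d_1}$ corresponds to the function $g(0)=0$, $g(d)=\beta^{d-1}$ for $d\ge1$, and $-g$ is not increasing concave at the origin, so the ICV argument does not apply verbatim there (this is the one step the paper passes over with ``repeating the same steps''). But the correct diagnosis is the mass at zero, not size-biasing, and in any case your resolution replaces the missing half-page of the paper's argument with machinery foreign to the lemma plus an unproven comparison. As written, the proposal is not a complete proof.
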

\begin{proof}
A proof is provided in Section~\ref{appen:L1}.
\end{proof}

The lemma tells us that $\mu^{2}_1 = 
\mu^{2}_2 \leq \mu^{1}_1 = \mu^{1}_2$. Thus, 
an implication of Lemma~\ref{lemma:L1} is that
even when the mean degrees of agents are fixed, 
the PoEs tend to increase as the degrees of 
agents become more spread out, i.e., have greater 
variability, suggesting that widespread failures
would be {\em less likely} as the degrees of 
agents vary more widely. 

We say that $X$ is smaller than $Y$ with 
respect to first-order stochastic dominance
(FSD) or usual stochastic order if, for all
increasing functions $\xi: \R \to \R$, 
$\E{\xi(X)} \leq \E{\xi(Y)}$ 
\cite{MullerStoyan, SO}. 
This is equivalent to $F_X(t) \geq F_Y(t)$
for all $t \in \R$. 
Clearly, by definition, FSD implies SSD. 
Hence, $D^{(1)}_{1,j} \leq_{FSD}
D^{(2)}_{1,j}$, $j = 1, 2$, is a sufficient
condition for Lemma~\ref{lemma:L1} to hold 
and, as one would expect, when agents' degrees 
become larger, an outbreak of failures is
more likely. 

\begin{table}[h]
\begin{center}
\begin{tabular}{r c}
& $d_1$ \\
$d_2$ & 
\begin{tabular}{|r| c c c c|}
\hline
& 0 & 1 & 2 & 3 \\ \hline
0 & 0.40000 & 0.12000 & 0.16000 & 0.01200 \\
1 & 0.03750 & 0.01125 & 0.01500 & 0.01125 \\
2 & 0.05000 & 0.01500 & 0.02000 & 0.01500 \\
3 & 0.01250 & 0.00375 & 0.00500 & 0.00375 \\
\hline
\end{tabular}
\end{tabular} \\
Degree distribution $p^{(1)}_1(d_1, d_2)$ \\
\vspace{0.1in}

\begin{tabular}{r c}
& $d_1$ \\
$d_2$ & 
\begin{tabular}{|r| c c c c|}
\hline
& 0 & 1 & 2 & 3 \\ \hline
0 & 0.40000 & 0.04000 & 0.32000 & 0.04000 \\
1 & 0.02500 & 0.00250 & 0.02000 & 0.00250 \\
2 & 0.07500 & 0.00750 & 0.06000 & 0.00750 \\
\hline
\end{tabular}
\end{tabular} \\
Degree distribution $p^{(2)}_1(d_1, d_2)$
\vspace{0.1in}

\begin{tabular}{r c}
& $d_1$ \\
$d_2$ & 
\begin{tabular}{|r| c c c c|}
\hline
& 0 & 1 & 2 & 3 \\ \hline
0 & 0.40000 & 0.04700 & 0.32000 & 0.03300 \\
1 & 0.02500 & 0.00250 & 0.02000 & 0.00250 \\
2 & 0.07500 & 0.00050 & 0.06000 & 0.01450 \\
\hline
\end{tabular}
\end{tabular} \\
Degree distribution $p^{(3)}_1(d_1, d_2)$
\end{center}
\caption{Degree distributions $p^{(\ell)}_1$, 
$\ell = 1, 2, 3$.}
\vspace{-0.1in}
\label{table:1}
\end{table}

{\bf Example 1:} 
Consider two degree distributions 
$p^{(1)}_1$ and $p^{(2)}_1$ shown in Table
\ref{table:1}. Even though these distributions 
may not be realistic or representative, we use them 
to illustrate our findings with numerical
examples. One can 
easily verify that (i) $D^{(\ell)}_{1,1}$ and 
$D^{(\ell)}_{1,2}$, $\ell = 1, 2$, are independent 
and (ii) $D^{(1)}_{1,j} \leq_{SSD}
D^{(2)}_{1,j}$, $j = 1, 2$. In addition, both
distributions yield 
\beqan
\bM
\myeq \left[ 
	\begin{array}{cccc}
	0 & 0.35 & 1.00 & 0 \\
	0.35 & 0 & 0 & 1.00 \\
	0 & 0.35 & 0.50 & 0 \\
	0.35 & 0 & 0 & 0.50 
	\end{array}
\right]
\eeqan
with $\rho(\bM) = 1.021 > 1$. The 
entropies of $D^{(1)}_1$ and $D^{(1)}_2$
(resp. $D^{(2)}_1$ and $D^{(2)}_2$) are
1.786 and 1.003 (resp. 1.461 and 0.884), 
respectively, suggesting that $D^{(1)}_j$, 
$j = 1, 2$, are more unpredictable than 
$D^{(2)}_j$, $j = 1, 2$.  

The PoE vector satisfying (\ref{eq:fixed}) for 
$p^{(1)}_1$ (resp. $p^{(2)}_1$) is $\bmu^1 = 
(0.9646 \ 0.9646 \ 0.9761 \ 0.9761)$
(resp. $\bmu^2 = (0.9586 \ 
0.9586 \ 0.9720 \ 0.9720)$).
Thus, although the two degree distributions yield
the same matrix $\bM$, the PoE is larger
under distribution $p^{(1)}_1$. Equivalently, 
the PoCF, beginning
with a random failure in either CS, 
is 0.0354 (resp. 0.0414) 
under $p^{(1)}_1$ (resp. $p^{(2)}_1$), which
represents roughly a 17 percent difference
in PoCF.

\subsection{The effects of degree dependence}

We now turn our attention to the role of 
dependence between the two degrees of an agent. 
To this end, we adopt a well-known dependence
order, called {\em concordance order} (CO)
\cite{MullerStoyan}: suppose that $\bX
= (X_1, X_2)$ and $\bY = (Y_1, Y_2)$ are
two bivariate rvs {\em with identical marginal 
distributions}. This 
means that the variability of each rv 
remains fixed. Then, $\bX$ is smaller than 
$\bY$ in CO $(\bX \leq_C \bY)$ if 
Cov($\xi_1(X_1), \xi_2(X_2)$) $\leq$
Cov($\xi_1(Y_1), \xi_2(Y_2)$) for all 
increasing functions $\xi_j$, $j = 1, 2$. 
Note that this implies Cov($X_1$, $X_2$)
$\leq$ Cov($Y_1, Y_2$). 

Roughly speaking, 
$\bX \leq_C \bY$ means that $Y_i$, $i = 1, 
2$, are more positively correlated than $X_i$, 
$i = 1, 2$. In addition, as explained 
in \cite[p. 109]{MullerStoyan}, CO
is the only integral stochastic order that
satisfies natural properties that one would
expect of a stochastic order for comparing
dependence.

The second lemma examines how the (positive)
dependence of degrees influences the PoEs. 
Its proof is omitted here due to a 
space constraint and can be found in 
\cite{La_arvix}.  

\begin{lemma} \label{lemma:L2}
Consider two degree distributions $p^{(\ell)}_1$, 
$\ell = 1, 2$,
with identical marginal distributions. 
Let ${\bf D}^{(\ell)}_1$, 
$\ell = 1, 2,$ be a random vector with
distribution $p^{(\ell)}_1$. Suppose 
${\bf D}^{(1)}_1 \leq_{C} {\bf D}^{(2)}_1$. 
Then, $\bmu^{1} \leq \bmu^{2}$, where 
$\bmu^{\ell}$, 
$\ell = 1, 2$, is the PoE vector under the 
degree distribution $p^{(\ell)}_1$.
\end{lemma}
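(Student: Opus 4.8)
The plan is to work with the multi-type generating functions $\bff=(f_1,\ldots,f_{2N_S})$ and the standard characterization of the extinction vector $\bmu$ as the minimal fixed point of $\bff$ in $[0,1]^{2N_S}$, obtained as the increasing limit $\lim_{n} \bff^{(n)}({\bf 0})$. Since $q_{i,j}=1$ and $p_1(d_1,d_2)=p_2(d_2,d_1)$, the map $\bff$ preserves the diagonal $\{s_1=s_2,\ s_3=s_4\}$, so by the symmetry $\mu_1=\mu_2=:a$ and $\mu_3=\mu_4=:b$ the problem collapses to the two scalar fixed-point equations $a=\E{b^{D_{1,1}}a^{D_{1,2}}}$ and $b=\E{b^{D_{1,1}-1}a^{D_{1,2}}\indicate{D_{1,1}\ge 1}}$. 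The overall strategy is the monotone-coupling argument for branching processes: if the map attached to the more concordant law dominates the one attached to the less concordant law, monotone iteration from ${\bf 0}$ yields $\bmu^{(1)}\le\bmu^{(2)}$, which is exactly the claim.

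The analytic engine is that on $\Z_+^2$, for laws with identical marginals, the concordance order coincides with the supermodular order, so $\bD^{(1)}_1\le_C\bD^{(2)}_1$ gives $\E{g(\bD^{(1)}_1)}\le\E{g(\bD^{(2)}_1)}$ for every supermodular $g$. For the type-$1$ and type-$2$ coordinates the relevant kernel is $g(d_1,d_2)=s_3^{d_1}s_2^{d_2}$, whose discrete mixed difference equals $s_3^{d_1}s_2^{d_2}(1-s_3)(1-s_2)\ge 0$ because $s_2,s_3\in[0,1]$; hence $f_1^{(1)}(\bs)\le f_1^{(2)}(\bs)$ and $f_2^{(1)}(\bs)\le f_2^{(2)}(\bs)$ for every $\bs\in[0,1]^4$. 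Equivalently, every marginal-preserving, correlation-increasing mass transfer raises these two generating functions.

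I expect the main obstacle to be the \emph{infected-internal} types $1^+$ and $2^+$. Because the parent (one internal neighbor) is removed, the kernel there is $\psi(d_1,d_2)=s_3^{d_1-1}s_2^{d_2}\indicate{d_1\ge 1}$, which is supermodular on $\{d_1\ge 1\}$ but whose mixed difference along the boundary row $d_1=0$ equals $s_2^{d_2}(s_2-1)\le 0$. Consequently a correlation-increasing transfer that moves mass into or out of the zero-internal-degree row \emph{raises} $f_1$ but \emph{lowers} $f_3$, so the naive pointwise domination $\bff^{(1)}\le\bff^{(2)}$ fails in the $+$ coordinates and the monotone iteration cannot be applied verbatim. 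This boundary term is present precisely because this section permits $p_{1,1}(0)>0$, and controlling its sign is where essentially all the difficulty sits.

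To circumvent this, I would not demand domination on all of $[0,1]^4$ but only at the single point $\bmu^{(1)}$: since $\bM$ is positively regular with $\rho(\bM)>1$, the vector $\bmu^{(2)}$ is the unique fixed point in $[{\bf 0},{\bf 1})$ and attracts every starting point there, so it suffices to verify $\bff^{(2)}(\bmu^{(1)})\ge\bmu^{(1)}$ (then the increasing $\bff^{(2)}$-iterates from $\bmu^{(1)}$ converge to $\bmu^{(2)}\ge\bmu^{(1)}$). The $a$-coordinate inequality is immediate from the supermodularity above. For the $b$-coordinate I would exploit the fixed-point identity $b^2=a-\E{a^{D_{1,2}}\indicate{D_{1,1}=0}}$ to rewrite $f_3$ as $\big(\E{b^{D_{1,1}}a^{D_{1,2}}}-\E{a^{D_{1,2}}\indicate{D_{1,1}=0}}\big)/b$, and then apply the discrete Hoeffding (joint-survival-function) representation of the concordance difference to reduce the needed inequality to a single scalar comparison between the interior contribution ($d_1\ge 1$, sign $\ge 0$) and the boundary contribution ($d_1=0$, sign $\le 0$), both evaluated at $(a,b)=\bmu^{(1)}$. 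The crux is showing this residual comparison is nonnegative; I expect it to follow from the fixed-point relations (which tie $\E{a^{D_{1,2}}\indicate{D_{1,1}=0}}$ to $a-b^2$) together with $s_2,s_3\le 1$. Assembling the two coordinate inequalities gives $\bmu^{(1)}\le\bmu^{(2)}$, i.e. stronger positive dependence raises the probability of extinction and lowers the probability of cascading failures.
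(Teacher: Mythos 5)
Your handling of the type-$1$/type-$2$ coordinates is exactly the paper's argument in Appendix C: the paper writes the kernel as the product of the increasing functions $-(\mu^1_3)^{d_1}$ and $-(\mu^1_1)^{d_2}$ and invokes M\"uller--Stoyan Thm.~3.8.2, which is the same computation as your supermodularity check, and it then closes with Harris's Corollary~2 at the single point $\bmu^1$, just as you propose. Your diagnosis of the $1^+,2^+$ coordinates is also sharp --- and in fact sharper than the paper, which disposes of them with the one sentence ``a similar argument also yields $\mu^1_3 \leq f^2_3(\bmu^1)$.'' That similar argument requires $d_1 \mapsto s_3^{d_1-1}\indicate{d_1\geq 1}$ to be monotone (or the kernel to be supermodular), and it is neither once the row $d_1=0$ carries mass, which Section~VI explicitly permits in this section.

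The gap in your proposal is that the step you flag as ``the crux'' is left as a hope, and it cannot be closed: the inequality $\mu^1_3 \leq f^2_3(\bmu^1)$ --- indeed the conclusion $\mu^1_3 \leq \mu^2_3$ itself --- is false in general. Take $\bD^{(\ell)}_1$ supported on $\{0,1\}^2$ with $\bP{D_{1,1}=1}=0.9$, $\bP{D_{1,2}=1}=0.8$, and joint mass at $(1,1)$ equal to $0.7$ for $\ell=1$ and $0.8$ for $\ell=2$; then $\bD^{(1)}_1 \leq_C \bD^{(2)}_1$, both mean matrices are positively regular with $\rho(\bM)>1$, and the symmetric fixed-point system collapses to $a=\pi_{00}+\pi_{01}a+b^2$, $b=\pi_{10}+\pi_{11}a$ (your identity $b^2=a-\E{a^{D_{1,2}}\indicate{D_{1,1}=0}}$ in disguise). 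Solving gives $(\mu_1,\mu_3)\approx(0.068,\,0.248)$ for $\ell=1$ and $(0.148,\,0.218)$ for $\ell=2$: the first two coordinates of $\bmu$ increase but the last two decrease, so $\bmu^1\leq\bmu^2$ fails. Your two candidate ingredients point in opposite directions here --- $f^2_1(\bmu^1)-f^1_1(\bmu^1)\geq 0$ but $\E{a^{D^{(2)}_{1,2}}\indicate{D^{(2)}_{1,1}=0}}\geq\E{a^{D^{(1)}_{1,2}}\indicate{D^{(1)}_{1,1}=0}}$ as well (the indicator kernel is itself supermodular) --- which is why no combination of the fixed-point relations with $s_2,s_3\leq 1$ can rescue the sign. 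The statement, and both your argument and the paper's, are repaired by reinstating the assumption $p_{1,1}(0)=0$ (dropped only in this section): then the kernel for $f_3$ may be replaced on the null row by $s_3^{d_1-1}s_2^{d_2}$, which is a product of two decreasing functions and supermodular everywhere, and the Harris step goes through in all four coordinates.
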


A key finding of Lemma~\ref{lemma:L2} is
that as the two degrees of agents become
more positively correlated, it becomes 
more difficult to set off cascading failures. 
One possible way to 
interpret this finding is that as the 
degrees become more positively correlated, 
the variability in the total degree of an agent, 
i.e., the sum of two degrees, also grows. 
Hence, Lemma~\ref{lemma:L1} suggests that 
the PoEs should increase.  
\myskip

{\bf Example 2: } For the second 
example, consider degree distributions 
$p^{(2)}_1$ and $p^{(3)}_1$ given in 
Table~\ref{table:1}. To obtain $p^{(3)}_1$, we 
modified $p^{(2)}_1$ in order to introduce
weak positive correlations between the two 
degrees by (i) adding 0.007 to $p^{(2)}_1(1, 0)$
and $p^{(2)}_1(3, 2)$ and (ii) subtracting 
0.007 from $p^{(2)}_1(3, 0)$ and
$p^{(2)}_1(1, 2)$. The correlation coefficient
of $D^{(3)}_{1,1}$ and $D^{(3)}_{1,2}$ is 0.0368, 
indicating weak positive correlations.

One can show that (i) $p^{(2)}_1$ and 
$p^{(3)}_1$ have the same 
marginal distributions and 
(ii) ${\bf D}^{(2)}_1 \leq_C {\bf D}^{(3)}_1$. 
Therefore, Lemma~\ref{lemma:L2} tells us
that $\bmu^2 \leq \bmu^3$. 
Indeed, $\bmu^3 = (0.9604 \ 0.9604 \ 0.9732 \ 
0.9732)$, which is larger 
than $\bmu^2$ from the previous example.
Accordingly, the PoCF 
decreases from 0.0414 to 0.0396, which 
represents approximately a 4.5 percent reduction 
in PoCF, 
despite very weak correlations in $p^{(3)}_1$;  
although $p^{(2)}_1$ and $p^{(3)}_1$ are close
(with Kullback-Leibler divergence 
$D_{{\rm KL}}(p^{(2)}_1 || p^{(3)}_1) =$ 0.0094), 
the likelihood of experiencing widespread
failures changes somewhat 
noticeably. This points
to possible {\em sensitivity} of PoEs
to the degree distributions, including
their dependence structure, 
in some cases.

\section{General Settings}
	\label{sec:General}

In Section~\ref{sec:Two}, we considered scenarios 
with two CSes and deterministic transmission 
of infections, and studied how the variability 
of degrees and dependence
between the two degrees of agents alter the
PoEs. In this section, we 
return to the general settings described
in Section~\ref{sec:Model} and examine 
how the degree distributions $(p_i; i \in 
\cS)$ shape the PoEs.

General settings pose additional challenges that 
we did not have to cope with in the simpler
two-CS scenarios. First, unlike in univariate
or bivariate cases, choosing a suitable stochastic 
order for comparing degree distributions becomes 
more problematic. The reason for this is that 
there are several different stochastic orders one 
can consider, which can be
viewed as extensions of a single
stochastic order for univariate rvs to
random vectors.
Second, perhaps more importantly, 
if the infection probabilities
$q_{i,j}$, $i, j \in \cS$, 
are not equal to one, even when two 
different sets of {\em degree} distributions 
$(p^{(\ell)}_i; \ i \in \cS)$, $\ell
= 1, 2$, can be ordered using some 
stochastic order, the associated {\em children}
distributions $(h^{(\ell)}_i; \ i \in 
\cS^+)$ are in general {\em not} guaranteed 
to preserve the ordering with respect to 
the same stochastic order. 

Consider two sets of degree distributions $\bp^{(\ell)} 
= (p^{(\ell)}_i; \ i \in \cS)$, $\ell = 1, 
2$. Let ${\bf D}^{(\ell)}_i = (D^{(\ell)}_{i,j}; \ 
j \in \cS)$, $\ell = 1, 2$, and $i \in \cS$,
be a random vector with distribution $p^{(\ell)}_i$. 
In order to make progress, we introduce the 
following assumption on internal failure 
probability functions $\wp^{in}_i$, 
$i \in \cS$. 

\begin{assm}	\label{assm:phi}
Assume that $\phi_i^\star: \N \to [0, \infty)$, 
where $\phi_i^\star(d) := d \cdot 
\wp^{in}_i(d, 1)$,  
is non-decreasing and satisfies
$\alpha \cdot	
\phi^\star_i(d_1) + (1 - \alpha)
\phi^\star_i(d_3)
\leq \phi^\star_i(d_2)$ for all $d_1
\leq d_2 \leq d_3$, where $\alpha
= (d_3 - d_2) / (d_3 - d_1)$.
\end{assm}

Roughly speaking, 
Assumption~\ref{assm:phi} states that an agent
has a higher {\em total} risk of experiencing a 
failure (due to the failure of one of the internal 
neighbors) with an increasing internal degree. 
An example of $\wp^{in}_i$ that satisfies Assumption
\ref{assm:phi} is $\wp^{in}_i(d, 1) = d^{-\beta}$ 
with $\beta \in [0, 1]$. Obviously, $\wp^{in}_i(d, 1) 
= \sum_{k} c_k \cdot d^{-\beta_k}$, where
$c_k > 0$ and $\beta_k \in [0, 1]$ for all $k$, 
also satisfies the assumption. We point out
that, as illustrated by the example, 
this assumption captures the heightened 
{\em aggregate} risk of failure seen by higher
degree agents, even though a single neighbor
might pose less risk (i.e., smaller 
$\wp^{in}_i(d, 1)$).

\subsection{The effects of dependence}

The dependence order we adopted in the
previous section, namely CO, 
can be generalized to random vectors 
consisting of more than two rvs: 
suppose that $\bX$ and $\bY$ are 
$n$-dimensional random vectors with 
$n > 2$. Then, $\bX$ is said to be
smaller than $\bY$ in CO
if $F_{\bX}(\bt) \leq F_{\bY}(\bt)$
and $\bar{F}_{\bX}(\bt) \leq 
\bar{F}_{\bY}(\bt)$ for all $\bt 
\in \R^n$, where $F_{\bX} = \bP{
\bX \leq \bt}$  and $\bar{F}_{\bX}
= \bP{\bX > \bt}$. One can verify
that these conditions imply 
that the marginal distributions 
are identical.

\subsubsection{Supermodular order}

In our study, we instead consider a dependence
order that is somewhat stronger than 
CO. This is called
{\em supermodular order} (SMO)
\cite{MullerStoyan}: a function 
$\xi: \R^n \to \R$ is called {\em 
supermodular}
if, for all $\bx, \by
\in \R^n$, 
\beqa
\xi(\bx) + \xi(\by) \leq
\xi(\bx \vee \by) + \xi(\bx \wedge \by), 
	\label{eq:supermodular}
\eeqa
where $\bx \wedge \by$ and $\bx \vee
\by$ are $(\min(x_i, y_i); \ i = 1, 2,
\ldots, n)$ and $(\max(x_i, y_i); \ i 
= 1, 2, \ldots, n)$, respectively. 
If the inequality in (\ref{eq:supermodular})
goes the other way, the function $\xi$
is called {\em submodular}. 

A random vector $\bX$ is smaller than 
a random vector $\bY$ in SMO 
($\bX \leq_{sm}
\bY$) if $\E{\xi(\bX)} \leq \E{\xi(\bY)}$
for all supermodular functions $\xi$
\cite{MullerStoyan}. SMO is a multivariate
positive dependence order that
satisfies the nine natural properties 
discussed in 
\cite[pp. 110-111]{MullerStoyan}. 
Furthermore, for bivariate cases, 
CO and SMO are {\em equivalent}. 
For $n > 2$, however, SMO implies CO, but 
they are no longer equivalent. Finally, 
if $\bX \leq_{sm} \bY$, we have
Cov($X_i, X_j$) $\leq$ Cov($Y_i, 
Y_j$) for all $i \neq j$.

The following theorem generalizes Lemma \ref{lemma:L2}. 
We will defer the proof of the theorem till after 
Theorem~\ref{thm:IDCV} in the subsequent subsection.

\begin{theorem}	\label{thm:SM}
Suppose that Assumption~\ref{assm:phi} holds
and ${\bf D}^{(1)}_i \leq_{sm} {\bf D}^{(2)}_i$ for
all $i \in \cS$. Let $\bmu^\ell$, $\ell = 1, 2$,
be the PoE vector under the degree distributions 
$(p^{(\ell)}_i; \ i \in \cS)$. Then, 
we have $\bmu^1 \leq \bmu^2$. 
\myskip
\end{theorem}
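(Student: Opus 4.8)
The plan is to reduce the comparison of the two extinction vectors to a \emph{pointwise} comparison of the generating-function maps $\bff^{(1)}$ and $\bff^{(2)}$, and then to obtain that pointwise comparison from a single supermodularity computation. First I would recall that, since each $f_i$ has nonnegative coefficients, $\bff^{(\ell)}$ is coordinatewise nondecreasing on $[0,1]^{2N_S}$ and the extinction vector is the increasing limit $\bmu^{\ell} = \lim_{n\to\infty}(\bff^{(\ell)})^{(n)}({\bf 0})$ of the iterates of $\bff^{(\ell)}$ from ${\bf 0}$. A routine induction then shows that if $\bff^{(1)}(\bs) \leq \bff^{(2)}(\bs)$ for every $\bs \in [0,1]^{2N_S}$, then $(\bff^{(1)})^{(n)}({\bf 0}) \leq (\bff^{(2)})^{(n)}({\bf 0})$ for all $n$ (combining the pointwise bound with the monotonicity of $\bff^{(2)}$), so that $\bmu^1 \leq \bmu^2$ in the limit. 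Thus the whole theorem hinges on establishing $\bff^{(1)} \leq \bff^{(2)}$ pointwise.

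To get that inequality I would write each $f_i$ explicitly. Conditioning on the degree vector $\bD_i = \bd$, the type-$j$ children for $j \neq i$ form a binomial thinning of $d_j$ with parameter $q_{i,j}$, the type-$i^+$ children a binomial thinning of $d_i$ with parameter $q_{i,i}$, and these thinnings are conditionally independent. Writing $t_j := 1 - q_{i,j}(1 - s_{\sigma(j)}) \in [0,1]$, where $\sigma$ sends $j \neq i$ to $j$ and $i$ to $i^+$, this gives $f_i(\bs) = \E{\prod_{j \in \cS} t_j^{D_{i,j}}} = \E{\psi(\bD_i)}$ with $\psi(\bd) := \prod_{j} t_j^{d_j}$. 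For type $i^+$, the shift by $\be_i$ in (\ref{eq:hi2}) together with the standing hypothesis $p_{i,i}(0) = 0$ (so $D_{i,i} \geq 1$ almost surely) yields $f_{i^+}(\bs) = t_i^{-1}\E{\psi(\bD_i)}$, a strictly positive multiple of the \emph{same} $\psi$ whenever $t_i > 0$. A key preliminary observation is that, since $\bD^{(1)}_i \leq_{sm}\bD^{(2)}_i$ forces identical marginals, the constant $q_{i,i} = d_{i,\avg}^{-1}\E{\phi_i^\star(D_{i,i})}$ (and trivially the given $q_{i,j}$, $j\neq i$) is the same in both systems; hence $\psi$, the multiplier $t_i^{-1}$, and the mean matrix $\bM$ are common to the two systems, and it suffices to compare $\E{\psi(\bD^{(1)}_i)}$ with $\E{\psi(\bD^{(2)}_i)}$.

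The crux is that $\psi(\bd) = \prod_j t_j^{d_j}$ is supermodular on $\Z_+^{N_S}$ whenever each $t_j \in [0,1]$. I would verify this by the discrete cross-difference test: for $k \neq l$, $\psi(\bd + \be_k + \be_l) - \psi(\bd + \be_k) - \psi(\bd + \be_l) + \psi(\bd) = (1 - t_k)(1 - t_l)\,\psi(\bd) \geq 0$. Given supermodularity, the defining property of the supermodular order applied to $\bD^{(1)}_i \leq_{sm}\bD^{(2)}_i$ gives $\E{\psi(\bD^{(1)}_i)} \leq \E{\psi(\bD^{(2)}_i)}$, hence $f_i^{(1)}(\bs) \leq f_i^{(2)}(\bs)$ and, multiplying by the common positive factor $t_i^{-1}$, also $f_{i^+}^{(1)}(\bs) \leq f_{i^+}^{(2)}(\bs)$, for every $i \in \cS$. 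This is precisely the pointwise comparison $\bff^{(1)} \leq \bff^{(2)}$ required in the first paragraph, and the theorem follows.

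I expect the main obstacle to be bookkeeping rather than conceptual: verifying carefully that the children mechanism really acts as conditionally independent coordinatewise thinnings (so that $f_i$ collapses to a single expectation of $\psi$), and handling the type-$i^+$ generating function cleanly, where the degree shift by $\be_i$ produces the factor $t_i^{-1}$. The one genuine edge case is $t_i = 0$, possible only when $q_{i,i} = 1$ and $s_{i^+} = 0$; there the rewriting $t_i^{-1}\E{\psi(\bD_i)}$ is a $0\cdot\infty$ artifact, which I would dispose of by continuity of the generating functions on $[0,1]^{2N_S}$, extending the inequality from the interior to that boundary point. I would also note that, because the supermodular order fixes all marginals, Assumption~\ref{assm:phi} is not actually active in this argument; it matters for the companion variability result (Theorem~\ref{thm:IDCV}), where the marginals, and therefore $q_{i,i}$, do change, so that the concavity of $\phi_i^\star$ is needed to control $q_{i,i}$. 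The shared integrand $\psi$ is in fact not only supermodular but also decreasing and convex in each coordinate, which is exactly why a single computation underpins both theorems.
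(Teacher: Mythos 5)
Your proof is correct, but it takes a genuinely different route from the paper's. The paper disposes of Theorem~\ref{thm:SM} in three lines by reduction to Theorem~\ref{thm:IDCV}: a DCX function is supermodular and componentwise convex, so $\leq_{sm}$ implies $\leq_{dcx}$, hence $\leq_{ddcx}$, which is the reversed IDCV order; the mean-equality hypothesis of Theorem~\ref{thm:IDCV} is automatic because the supermodular order fixes marginals. The price of that shortcut is that it inherits the full machinery behind Theorem~\ref{thm:IDCV}, in particular Lemma~\ref{lemma:hidcv}, which transfers the ordering from degree vectors to children vectors via a closure theorem for IDCV functions under compound sums (Lemma 2.17 of \cite{MeesterShanti}) and which is where Assumption~\ref{assm:phi} enters. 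You bypass all of this: writing $f_i(\bs)=\E{\prod_j t_j^{D_{i,j}}}$ directly via binomial thinning and checking supermodularity of $\bd\mapsto\prod_j t_j^{d_j}$ by the cross-difference identity yields the pointwise comparison of generating functions in one step. Your route buys two things. First, it shows Assumption~\ref{assm:phi} is not actually needed for Theorem~\ref{thm:SM}: your observation that $q_{i,i}$ depends only on the marginal $p_{i,i}$, which the supermodular order freezes, is exactly right, and in the paper's own Lemma~\ref{lemma:hidcv} the concavity of $\phi_i^\star$ is used only to compare $q^1_{i,i}$ with $q^2_{i,i}$ --- a comparison that is trivial when the marginals coincide. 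Second, it is self-contained, replacing the appeal to Corollary 2 of \cite{Harris} by the elementary monotone iteration $\bmu^\ell=\lim_n(\bff^\ell)^{(n)}({\bf 0})$, which is equivalent (the paper in fact only needs the inequality at the single point $\bmu^1$, whereas you establish it everywhere). The two technicalities you flag --- using the pairwise cross-difference test as the lattice form of supermodularity for integer-supported vectors, and handling $t_i=0$ by continuity --- are both handled correctly. Your closing remark that the same integrand, being also decreasing and componentwise convex, underlies the variability result matches the paper's use of the DDCX function $\psi_{\bs}(\bx)=\exp(-\sum_i\log(s_i^{-1})x_i)$ in Appendix~\ref{appen:IDCV}.
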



\subsection{The effects of variability}

In Section~\ref{sec:Two} with bivariate
degrees, we assumed {\em independence} of two 
degrees and studied the influence of 
variability of each degree on the PoEs
with the help of SSD (or ICV order). 
In this section, we adopt a 
stochastic order that allows us to 
examine the impact of variability
with a common dependence structure captured
by what is called {\em copula}~\cite{Copula}, 
without having to assume independence. 

\subsubsection{Copulas}

Suppose that $\bX$ is an $n$-dimensional
random vector with a joint distribution 
function $F_{\bX}$. A copula of $\bX$
(or associated with $F_{\bX}$) is a 
function $C_{\bX}: [0, 1]^n \to [0, 1]$ 
satisfying
\beqa
\myhb C_{\bX}(F_{X_1}(x_1), \ldots, F_{X_n}(x_n))
\myeq F_{\bX}(\bx) \ \mbox{ for all $\bx \in \R^n$},
	\label{eq:copula}
\eeqa
where $F_{X_i}$ is the 
marginal distribution of $X_i$, $i = 1, 2
\ldots, n$. For instance, a copula of
mutually independent $X_i$, $i = 1, 2
\ldots, n$, is a 
product function, i.e., for all 
$\bu \in [0, 1]^n$, we have  
$C_{\bX}(\bu) = \prod_{i=1}^n u_i$. 
Also, if $F_{X_i}$, $i = 1, 2, \ldots, n$,
are continuous, 
there is a unique copula that satisfies
(\ref{eq:copula}). 

It is clear from (\ref{eq:copula}) that 
a copula of a random vector captures most
of the dependence structure properties 
that do not depend on the marginal 
distributions.
In this sense, two random vectors with 
a common copula have similar dependence structure
among the comprising rvs. For a more detailed
discussion of copulas, we refer an 
interested reader to a manuscript
by Nelson~\cite{Copula}.

\subsubsection{Increasing directionally concave order}

A function $\xi: \R^n \to \R$ is said to 
{\em directionally concave} (DCV) if, for all 
$\bx_i \in \R^n$, $i = 1, 2, 3, 4$, with 
$\bx_1 \leq \bx_2, \bx_3 \leq \bx_4$
and $\bx_1 + \bx_4 = \bx_2 + \bx_3$, we have
$\xi(\bx_1) + \xi(\bx_4)
\leq \xi(\bx_2) + \xi(\bx_3)$. 
If the inequality goes the other way, the 
function is called {\em directionally 
convex} (DCX). Clearly, $\xi$ is DCV iff  
$-\xi$ is DCX. It turns out that 
a function $\xi$ is DCV (resp. DCX) iff 
it is {\em submodular} and {\em componentwise 
concave} (resp. {\em supermodular} and 
{\em componentwise convex})~\cite[p. 335]{SO}. 

Let $\bX$ and $\bY$ be two $n$-dimensional random 
vectors. Random vector $\bX$ precedes $\bY$ in 
DCV order ($\bX \leq_{dcv} \bY$) if $\E{\xi(\bX)} 
\leq \E{\xi(\bY)}$ for all DCV functions $\xi$. 
Note that $\bX \leq_{dcv} \bY$ iff
$\bY \leq_{dcx} \bX$. If the inequality
is required only for {\em increasing} 
DCV functions, we say that $\bX$ precedes $\bY$ 
in increasing DCV (IDCV) order. As expected,
$\bX \leq_{idcv} \bY$ iff $\bY$ precedes $\bX$ in 
{\em decreasing} DCX (DDCX) order
($\bY \leq_{ddcx} \bX$). 

If $\bX \leq_{idcv} \bY$, 
then $X_i \leq_{icv} Y_i$ for all $i = 1, 
2, \ldots, n$. Therefore, $Y_i$, $i = 1, 2, 
\ldots, n$, are in a way more 
{\em predictable} than $X_i$, $i = 1, 2, 
\ldots, n$. 
As pointed out in \cite[p. 135]{MullerStoyan}, 
the DCV (or DCX) order goes one
step further and allows us to compare 
random vectors with a {\em common copula}, but 
{\em with different variability in the 
marginals}. Moreover, 
an example is provided to illustrate that
convex order is not suitable for this purpose.

Utilizing the IDCV order (or, equivalently, DDCX
order), the following theorem sheds some light 
on how the variability of agents' degrees influences
the PoEs, even when the mean degrees stay fixed.

\begin{theorem}	\label{thm:IDCV}
Suppose that Assumption~\ref{assm:phi} holds
and ${\bf D}^{(1)}_i \leq_{idcv}
{\bf D}^{(2)}_i$ with $\E{{\bf D}^{(1)}_i}
= \E{{\bf D}^{(2)}_i}$ for
all $i \in \cS$. Let $\bmu^\ell$, $\ell = 1, 2$,
be the PoE vector under the distributions 
$(p^{(\ell)}_i; \ i \in \cS)$. Then, 
we have $\bmu^2 \leq \bmu^1$. 
\end{theorem}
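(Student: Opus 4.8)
The plan is to reduce the comparison of the two fixed points to a pointwise comparison of the offspring generating functions, and then propagate that comparison through the standard monotone iteration. Recall that $\bmu^\ell$ is the smallest solution in $[0,1]^{2N_S}$ of $\bff^{(\ell)}(\bs)=\bs$ and is obtained as the increasing limit $\bmu^\ell=\lim_{n}(\bff^{(\ell)})^{n}({\bf 0})$ \cite{Harris}. Each $f^{(\ell)}_i$ is nondecreasing in $\bs$ because its coefficients $h^{(\ell)}_i(\bo)$ are nonnegative. Hence it suffices to prove $f^{(2)}_i(\bs)\le f^{(1)}_i(\bs)$ for every $\bs\in[0,1]^{2N_S}$ and every $i\in\cS^+$: an induction on $n$ then gives $(\bff^{(2)})^{n}({\bf 0})\le(\bff^{(1)})^{n}({\bf 0})$, and letting $n\to\infty$ yields $\bmu^2\le\bmu^1$.

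The first step is to put the generating functions in closed form. Using the edge-thinning (binomial) structure of (\ref{eq:hi1}), for a type $i$ agent one gets
\beqan
f^{(\ell)}_i(\bs)
= \E{\left(\psi^{(\ell)}_{i,i}(s_{i^+})\right)^{D^{(\ell)}_{i,i}}
\prod_{j\ne i}\left(\psi_{i,j}(s_j)\right)^{D^{(\ell)}_{i,j}}},
\eeqan
where $\psi_{i,j}(s):=1-q_{i,j}(1-s)\in(0,1]$ for $s\in[0,1]$ (only the internal factor $\psi^{(\ell)}_{i,i}$ carries an $\ell$, since by (\ref{eq:qii-1}) it is $q_{i,i}$ alone that depends on the degree law), and for a type $i^+$ agent the internal exponent is reduced by one (cf. (\ref{eq:hi2})). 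Fixing $\bs$ and setting $c^{(\ell)}_{i,j}:=-\log\psi^{(\ell)}_{i,j}\ge0$, the integrand becomes $G^{(\ell)}(\bd)=\exp\!\left(-\sum_{j}c^{(\ell)}_{i,j}\,d_j\right)$. I would then observe that, for any fixed nonnegative coefficients, $\bd\mapsto\exp(-\sum_j c_{i,j}d_j)$ is nonincreasing, componentwise convex, and supermodular (its mixed second differences are nonnegative), hence a \emph{decreasing directionally convex} function.

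The crux is that the internal infection probability $q_{i,i}$ itself depends on the degree law, so $f^{(1)}_i$ and $f^{(2)}_i$ are not one fixed function evaluated at two distributions. I would therefore split the comparison into two steps. Since ${\bf D}^{(1)}_i\le_{idcv}{\bf D}^{(2)}_i$ forces the marginal relation $D^{(1)}_{i,i}\le_{icv}D^{(2)}_{i,i}$, and Assumption~\ref{assm:phi} makes $\phi^\star_i(d)=d\,\phi_i(d)$ increasing and concave, we obtain $\E{\phi^\star_i(D^{(1)}_{i,i})}\le\E{\phi^\star_i(D^{(2)}_{i,i})}$; dividing by the common mean $d_{i,\avg}$ (equal by the equal-means hypothesis) gives $q^{(1)}_{i,i}\le q^{(2)}_{i,i}$, hence $\psi^{(2)}_{i,i}(s)\le\psi^{(1)}_{i,i}(s)$ and $G^{(2)}\le G^{(1)}$ pointwise. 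This delivers the first inequality $\E{G^{(2)}({\bf D}^{(2)}_i)}\le\E{G^{(1)}({\bf D}^{(2)}_i)}$. For the second, I would use the hypothesis in its equivalent form ${\bf D}^{(2)}_i\le_{ddcx}{\bf D}^{(1)}_i$: because $G^{(1)}$ is decreasing and directionally convex, $\E{G^{(1)}({\bf D}^{(2)}_i)}\le\E{G^{(1)}({\bf D}^{(1)}_i)}$. Chaining the two gives $f^{(2)}_i(\bs)\le f^{(1)}_i(\bs)$. The type $i^+$ case is handled identically with $G^{(\ell)}$ replaced by $\tilde G^{(\ell)}=(\psi^{(\ell)}_{i,i})^{-1}G^{(\ell)}$: $\tilde G^{(1)}$ is a positive multiple of $G^{(1)}$, hence still decreasing directionally convex, and $\tilde G^{(2)}\le\tilde G^{(1)}$ pointwise since the exponent $d_i-1\ge0$ (as $p_{i,i}(0)=0$).

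The main obstacle, and the reason Assumption~\ref{assm:phi} cannot be dropped, is exactly this dependence of $q_{i,i}$ on the degree distribution: lowering the variability makes the size-biased internal degrees less spread out, and only the concavity of $\phi^\star_i$ guarantees that the net effect raises the internal infection probability, i.e. pushes $G$ \emph{down} rather than up. A secondary technical point is matching the order to the geometry of the integrand: one must check that $G$ is decreasing \emph{and} directionally convex (supermodular plus componentwise convex), which is precisely the test class dual to the IDCV order, so that the equal-means assumption enters only through the $q_{i,i}$ comparison and nowhere else.
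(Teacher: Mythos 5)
Your proof is correct, and it reaches the paper's conclusion by the same underlying two-step decomposition (first swap the degree law while holding the thinning probabilities fixed, then swap $q^{(1)}_{i,i}$ for $q^{(2)}_{i,i}$, with Assumption~\ref{assm:phi} plus the equal-means hypothesis delivering $q^{(1)}_{i,i}\le q^{(2)}_{i,i}$ exactly as in Appendix~\ref{appen:lemma-hidcv}). The genuine difference is where the comparison is carried out. The paper first proves the stronger intermediate statement that the \emph{children vectors} satisfy ${\bf O}^{(1)}_i\le_{idcv}{\bf O}^{(2)}_i$ for \emph{every} IDCV test function $g$ (Lemma~\ref{lemma:hidcv}); this forces it to show that the binomially thinned functional $g^*_\ell(\bd)=\E{g(\bY(\bd,\ell))}$ inherits the IDCV property, which requires importing Lemma 2.17 of Meester--Shanthikumar on preservation of IDCV under i.i.d.\ random sums. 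You bypass that machinery entirely: since the extinction-probability computation only ever evaluates the children distribution against products of powers $\prod_j s_j^{o_j}$, the thinning collapses into the closed form $\prod_j(1-q_{i,j}(1-t_j))^{d_j}$, and you verify directly that this single family of exponential integrands is decreasing directionally convex. Your argument is therefore leaner and more self-contained for this theorem, at the cost of not establishing the reusable ordering of the children vectors; the paper's lemma buys a statement that holds for all IDCV functions, not just the ones the generating function needs. The two remaining cosmetic differences are harmless: you prove $f^{(2)}_i(\bs)\le f^{(1)}_i(\bs)$ for all $\bs$ and conclude by monotone iteration from ${\bf 0}$, whereas the paper only needs the inequality at $\bs=\bmu^1$ and invokes Corollary 2 of Harris; and you handle the type-$i^+$ case by dividing out one factor of $\psi_{i,i}$ rather than by shifting the degree distribution, which is equivalent and valid because $p_{i,i}(0)=0$ keeps the exponent $d_i-1$ nonnegative. (For full rigor you would want to note that the logarithmic reparametrization $c_{i,j}=-\log\psi_{i,j}$ degenerates when some $s_j=0$ and $q_{i,j}=1$, but the pointwise inequality extends to the boundary by continuity of the generating functions, so this does not affect the iteration from ${\bf 0}$.)
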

\begin{proof}
A proof of Theorem~\ref{thm:IDCV} is given in 
Section~\ref{appen:IDCV}. 
\end{proof}
 
As explained before, when ${\bf D}^{(1)}_i 
\leq_{idcv} {\bf D}^{(2)}_i$ for 
some $i \in \cS$, the degrees of CS $i$ agents
with the distribution $p^{(1)}_i$ have greater 
variability than with the distribution $p^{(2)}_i$. 
As a result, Theorem~\ref{thm:IDCV}
can be viewed as a generalization of Lemma
\ref{lemma:L1}.

We are now ready to provide the proof of
Theorem~\ref{thm:SM}. 
\myskip 

\hspace{0.07in} 
{\em Proof of Theorem~\ref{thm:SM}:}
Recall that a function is DCX iff it is both 
supermodular and componentwise convex. Thus, 
it is obvious that SMO implies DCX order, hence
DDCX order. For this reason, if ${\bf D}^{(2)}_i 
\leq_{sm} {\bf D}^{(1)}_i$, then ${\bf D}^{(2)}_i 
\leq_{ddcx} {\bf D}^{(1)}_i$ or, equivalently, 
${\bf D}^{(1)}_i \leq_{idcv} {\bf D}^{(2)}_i$.
Now Theorem
\ref{thm:SM} follows from Theorem~\ref{thm:IDCV}.

\subsection{Comparison on the basis of 
	children distributions}

In Theorems~\ref{thm:SM} and \ref{thm:IDCV}, 
the inequalities in stochastic orders
are imposed on the degree distributions 
$(p^{(\ell)}_i; \ i \in \cS)$. In some cases, 
however, it may be possible to estimate the 
children distribution. If we could directly
compare the {\em children} distributions
(as we implicitly 
did in Section~\ref{sec:Two} for 
two symmetric CSes), we can prove a stronger 
result than Theorem~\ref{thm:IDCV}.

Suppose that $\bX$ and $\bY$ are two 
$n$-dimensional random vectors. We say 
that $\bX$ is smaller than $\bY$ in 
{\em Laplace transform} (LT) order
($\bX \leq_{LT} \bY$) if 
$\E{\exp\left( - \bs \bX^T \right)}
\geq \E{\exp\left( - \bs \bY^T \right)}
\ \mbox{ for all } \bs > {\bf 0}$.

One can easily show that, for every 
$\bs > {\bf 0}$, the function $\xi_{\bs}:
\R^n \to \R$, where
$\xi_{\bs}(\bx) = \exp(- \bs \bx^T)$ 
for $\bx \in \R^n$, is DDCX: since $\xi_{\bs}$
is twice differentiable, it is DCX iff
$\frac{\partial^2}{\partial x_i \partial x_j}
\xi_{\bs}(\bx) \geq 0$ for all $\bx \in 
\R^n$~\cite[Theorem 3.12.2, p. 132]{MullerStoyan}.
For all $i, j \in \{1, 2, \ldots, n\}$, we have 
\beqan
\frac{\partial^2}{\partial x_i \partial x_j}
	\xi_{\bs}(\bx)
\myeq s_j s_i \exp\left( - \bs \bx^T
	\right) > 0 \ \mbox{ for all } 
	\bx \in \R^n.
\eeqan
Clearly, $\xi_{\bs}(\bx)$ is decreasing in 
$\bx$ and, hence, is DDCX. This tells us
that if $\bX \leq_{ddcx} \bY$, then 
$\bX \leq_{LT} \bY$.  

Suppose that $(h^{(\ell)}_i; \ i \in \cS^+)$, 
$\ell = 1, 2$, 
are the children distributions under degree
distributions $(p^{(\ell)}_i; 
\ i \in \cS)$. Let $\bO^{(\ell)}_i
= (O^{(\ell)}_{i,j}; \ j \in \cS^+)$
be the random children vector with distribution
$h^{(\ell)}_i$, $\ell = 1, 2$, and $i \in \cS^+$. 
The following theorem holds under a much weaker 
condition (namely, LT order) than IDCV order
without Assumption~\ref{assm:phi}.

\begin{theorem}	\label{thm:LT}
Suppose that ${\bf O}^{(1)}_i \leq_{LT}
{\bf O}^{(2)}_i$ for all $i \in \cS^+$. 
Let $\bmu^\ell$, $\ell = 1, 2$,
be the PoE vector under the children 
distributions 
$(h^{(\ell)}_i; \ i \in \cS^+)$. Then, 
$\bmu^2 \leq \bmu^1$. 
\myskip
\end{theorem}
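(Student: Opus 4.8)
The plan is to recognize that, after the change of variables $s_j = e^{-t_j}$, the probability generating function $f^{(\ell)}_i$ associated with $h^{(\ell)}_i$ is precisely the Laplace transform of the children vector $\bO^{(\ell)}_i$. The hypothesis ${\bf O}^{(1)}_i \leq_{LT} {\bf O}^{(2)}_i$ then translates directly into a pointwise inequality between the two generating maps, and the conclusion $\bmu^2 \leq \bmu^1$ follows by comparing the monotone iterations that produce their smallest fixed points.

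First I would record the key identity. For $\bs \in (0,1)^{2N_S}$, set $t_j = -\ln s_j > 0$ and $\bt = (t_j; \ j \in \cS^+)$, so that
\[
f^{(\ell)}_i(\bs) = \E{\prod_{j \in \cS^+} s_j^{O^{(\ell)}_{i,j}}} = \E{\exp\!\big( - \bt \, (\bO^{(\ell)}_i)^T \big)}.
\]
Thus $f^{(\ell)}_i$ evaluated at $\bs$ equals the Laplace transform of $\bO^{(\ell)}_i$ evaluated at $\bt > {\bf 0}$, and the assumption ${\bf O}^{(1)}_i \leq_{LT} {\bf O}^{(2)}_i$ yields $f^{(1)}_i(\bs) \geq f^{(2)}_i(\bs)$ for every $i \in \cS^+$ and every $\bs \in (0,1)^{2N_S}$. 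Since each $f^{(\ell)}_i$ is a power series with nonnegative coefficients summing to one, it is continuous on the closed cube $[0,1]^{2N_S}$; the inequality, established on the dense interior, therefore extends to all of $[0,1]^{2N_S}$.

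Next I would invoke the standard representation of the PoE vector as the limit of generating iterates started at the origin. For $\ell = 1, 2$, set $\bs^{(\ell,0)} = {\bf 0}$ and $\bs^{(\ell,k+1)} = \bff^{(\ell)}(\bs^{(\ell,k)})$. Because the coefficients $h^{(\ell)}_i(\bo)$ are nonnegative, each $\bff^{(\ell)}$ is coordinatewise nondecreasing on $[0,1]^{2N_S}$, so the iterates increase monotonically to the smallest fixed point of $\bff^{(\ell)}$ in $[0,1]^{2N_S}$; under $\rho(\bM) > 1$ this smallest fixed point is exactly the PoE vector $\bmu^\ell < {\bf 1}$ (see \cite{Harris}). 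I would then prove $\bs^{(2,k)} \leq \bs^{(1,k)}$ for all $k$ by induction. The base case is an equality at ${\bf 0}$, and for the inductive step
\[
\bs^{(2,k+1)} = \bff^{(2)}(\bs^{(2,k)}) \leq \bff^{(2)}(\bs^{(1,k)}) \leq \bff^{(1)}(\bs^{(1,k)}) = \bs^{(1,k+1)},
\]
where the first inequality uses monotonicity of $\bff^{(2)}$ together with the inductive hypothesis and the second uses the pointwise bound $f^{(2)}_i \leq f^{(1)}_i$ from the previous step. Letting $k \to \infty$ gives $\bmu^2 \leq \bmu^1$.

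The one delicate point I anticipate is the boundary of the cube: the LT order is defined only for $\bt > {\bf 0}$, i.e. for $\bs$ in the open interior, whereas the iteration is seeded at ${\bf 0}$ and visits faces where some coordinates vanish. This is resolved entirely by the continuity argument in the second paragraph, which lifts the interior comparison to the full cube $[0,1]^{2N_S}$. It is worth emphasizing that no moment or monotonicity condition on the children vectors is required, which is precisely why this statement holds under the weaker LT order and without Assumption~\ref{assm:phi}.
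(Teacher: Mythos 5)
Your proposal is correct and follows essentially the same route as the paper's proof in Appendix F: both identify the generating function at $\bs \in (0,1)^{2N_S}$ with the Laplace transform of the children vector at $\bt = -\ln \bs > {\bf 0}$, deduce $f^{2}_i(\bs) \leq f^{1}_i(\bs)$ from the LT order, and conclude by a fixed-point comparison. The only difference is that the paper evaluates this inequality at $\bmu^2$ and cites Harris's Corollary 2, whereas you rederive that comparison from scratch via the monotone iteration started at ${\bf 0}$, which is the standard proof of the same fact (and your care about extending the inequality from the open cube to its boundary by continuity is a legitimate, if minor, point the paper glosses over).
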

\begin{proof}
Please see Section~\ref{appen:LT} for a proof. 
\end{proof}

\change{
\subsection{Discussion}	
	\label{subsec:Discussion}
	
In this subsection, we briefly discuss some of
modeling assumptions and their roles. 

$\bullet$ {\bf Asymmetric inter-CS dependence and
symmetric intra-CS dependence --}
In our model outlined in Section~\ref{sec:Model}, 
we purposely assumed somewhat different 
intra-CS dependence and inter-CS dependence
among agents. More specifically, we assumed
that intra-CS dependence is symmetric, while
inter-CS dependence is asymmetric. 

Suppose that the inter-CS dependence is symmetric
instead. Then, the inter-CS infection probabilities 
$q_{i,j}$, $i \neq j$, can be computed in a manner 
similar to that of $q_{i,i}$, $i \in \cS$, 
as outlined in Section~\ref{subsec:InfProb}. Since 
Theorems~\ref{thm:SM} - \ref{thm:LT} hold
with symmetric internal dependence, it is 
not surprising that 
the same results hold with symmetric
inter-CS dependence if $\wp^{ex}_{j,i}(d, 1)$, 
$j \neq i$, satisfy Assumption~\ref{assm:phi}.
Similarly, if the intra-CS dependence is 
asymmetric, under the same assumption that
the conditional degree distributions of failed
agents are similar to the prior degree
distributions $p_i$, $i \in \cS$, 
the results still hold.
These observations suggest that our findings 
are true under more general settings and
are not sensitive to specific modeling 
assumptions introduced in the study.

$\bullet$ {\bf Effects of failure probability
functions --} 
Recall that the functions $\wp^{in}_i$ and 
$\wp^{ex}_{j,i}$ are used to model 
the vulnerability of agents to the failures of 
those that support them. 
It is clear from Section~\ref{subsec:InfProb}
that, except for Assumption
\ref{assm:phi} on $\wp^{in}_i$, 
our main findings in Theorems
\ref{thm:SM} - \ref{thm:LT} do not impose 
any other conditions on $\wp^{in}_i$ and 
$\wp^{ex}_{j,i}$. In particular, although 
it is reasonable to
expect agents with larger incoming
degrees to be less susceptible to the failure
of a {\em single} supporting agent, the 
functions $\wp^{ex}_{j,i}$ need not be 
monotonic. 
This suggests that our reported findings are
{\em insensitive} to the exact 
choices of these functions as long as 
Assumption~\ref{assm:phi} is met by
$\wp^{in}_i$, $i \in \cS$. 

But, we also note that this may no longer be 
true if the dependence graph exhibits 
assortativity, i.e., 
the degrees of neighbors are correlated. 
We do not study the impact of 
assortativity here, and refer an
interested reader to \cite{La_TON17}
for a study of its influence in 
a single homogeneous network with strategic
agents.

}

\section{Proofs of Main results}
	\label{sec:Proofs}

This section provides the proofs of our main
results. A reader who is not interested in the
proofs can safely skip the section. 

\subsection{Proof of Lemma~\ref{lemma:L1}}
	\label{appen:L1}

Let $\bff^{\ell}$, $\ell = 1, 2$, be the generating 
functions corresponding to the degree distributions 
$p^{(\ell)}_1$, $\ell = 1, 2$. 
Using the definition of the generating function, 
we obtain
\beqa
f_1(\bs)
\myeq \sum_{\bd \in \Z_+^2} p_1(\bd) \ s_3^{d_1}
	s_2^{d_2} \ \mbox{ and } 
	\label{eq:L1-1} \\
f_3(\bs)
\myeq \sum_{\bd \in \Z_+^2} p_1(\bd+{\bf e}_1) \ s_3^{d_1}
	s_2^{d_2}.
	\label{eq:L1-3} 
\eeqa
The expressions for $f_2(\bs)$ and $f_4(\bs)$ can be 
easily obtained using the assumed symmetry of degree 
distributions $p_1$ and $p_2$. 
Recall that, from the assumed symmetry of the degree 
distributions, we know $\mu_1 = 
\mu_2$ and $\mu_3 = \mu_4$.

Substituting $\mu_1$ for $\mu_2$ in (\ref{eq:L1-1}) for 
the PoEs and using the assumed independence of the two
degrees of an agent, 
\beqan
\myhb \mu_1 
\myeq \sum_{\bd \in \Z_+^2} p_1(\bd) 
	\ \mu_3^{d_1} \mu_1^{d_2} 
		\label{eq:L1-5} \\
\myeq \left( \sum_{d_1 \in \Z_+} p_{1,1}(d_1) 
		\ \mu_3^{d_1} \right)
	\left( \sum_{d_2 \in \Z_+} p_{1,2}(d_2) 
		\ \mu_1^{d_2} \right). 
	\nonumber
\eeqan

Clearly, for any fixed $s \in (0, 1)$, $s^x$
is a decreasing, convex function of $x$ or, 
equivalently, $-s^x$ is an increasing, concave 
function of $x$. Because $D^{(1)}_{1,j}
\leq_{SSD} D^{(2)}_{1, j}$, $j = 1, 2,$
we have, for all $\bmu < {\bf 1}$,  
\beqan
&& \myhb \left( \sum_{d_1 \in \Z_+} p^{(2)}_{1,1}(d_1) 
		\ \mu_3^{d_1} \right)
	\left( \sum_{d_2 \in \Z_+} p^{(2)}_{1,2}(d_2) 
		\ \mu_1^{d_2} \right) \lb
\myleq \left( \sum_{d_1 \in \Z_+} p^{(1)}_{1,1}(d_1) 
		\ \mu_3^{d_1} \right)
	\left( \sum_{d_2 \in \Z_+} p^{(1)}_{1,2}(d_2) 
		\ \mu_1^{d_2} \right). 
\eeqan
Repeating the same steps, starting with 
(\ref{eq:L1-3}), yields
\beqan
&& \myhb \left( \sum_{d_1 \in \Z_+} p^{(2)}_{1,1}(d_1+1) 
		\ \mu_3^{d_1} \right)
	\left( \sum_{d_2 \in \Z_+} p^{(2)}_{1,2}(d_2) 
		\ \mu_1^{d_2} \right) \lb
\myleq \left( \sum_{d_1 \in \Z_+} p^{(1)}_{1,1}(d_1+1) 
		\ \mu_3^{d_1} \right)
	\left( \sum_{d_2 \in \Z_+} p^{(1)}_{1,2}(d_2) 
		\ \mu_1^{d_2} \right). 
\eeqan

Since these two inequalities hold for all $\bmu < {\bf 1}$, 
we get 
\beqa
\bff^2(\bmu^1) 
\myleq \bff^1(\bmu^1) = \bmu^1. 
	\label{eq:L1-7}
\eeqa
Corollary 2~\cite[p. 42]{Harris} tells us that
the inequality $\bff^2(\bmu^1) \leq \bmu^1$ 
in (\ref{eq:L1-7}) means $\bmu^2 \leq \bmu^1$. 
This completes the proof.

%
%

\subsection{Proof of Theorem~\ref{thm:IDCV}}
	\label{appen:IDCV}
	
In order to prove the theorem, we will make
use of the following lemma. Its proof is
provided in Section~\ref{appen:lemma-hidcv}.
\myskip

\begin{lemma}	\label{lemma:hidcv}
Under Assumption~\ref{assm:phi}, we have
${\bf O}^{(1)}_i \leq_{idcv} {\bf O}^{(2)}_i$
for all $i \in \cS^+$, 
where ${\bf O}^{(\ell)}_i$, $\ell = 1, 2$,
is a random vector with the children distribution
$h^{(\ell)}_i$.
\myskip
\end{lemma}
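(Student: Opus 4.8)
The plan is to read off from (\ref{eq:hi1})--(\ref{eq:hi2}) that each children vector is a \emph{binomial thinning} of the corresponding degree vector, and then to show that thinning preserves the IDCV order. For a type $i$ agent the factor $\binom{d_j}{\bar d_j(\bo)}q_{i,j}^{\bar d_j(\bo)}\bar q_{i,j}^{d_j-\bar d_j(\bo)}$ shows that, conditioned on $\bD^{(\ell)}_i=\bd$, the coordinates of $\bO^{(\ell)}_i$ are independent, with coordinate $j\neq i$ distributed as $\mathrm{Binomial}(d_j,q_{i,j})$ and the $i^+$ coordinate as $\mathrm{Binomial}(d_i,q_{i,i})$. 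Writing $T_\bq$ for this coordinatewise thinning kernel with retention vector $\bq=(q_{i,1},\dots,q_{i,N_S})$, we have $\bO^{(\ell)}_i=T_{\bq^{(\ell)}}(\bD^{(\ell)}_i)$. The external retentions $q_{i,j}$, $j\neq i$, are fixed model parameters and so coincide for $\ell=1,2$, whereas by (\ref{eq:qii-1}) the internal retention is $q^{(\ell)}_{i,i}=\E{\phi_i^\star(D^{(\ell)}_{i,i})}/\E{D^{(\ell)}_{i,i}}$ and may differ. For a type $i^+$ agent the only change is that one internal neighbour (the parent) is removed first, i.e. the input is $\bD^{(\ell)}_i-\be_i$; since monotonicity and directional concavity are translation invariant, this case reduces to the type $i$ argument applied to the shifted vectors, using $p_{i,i}(0)=0$ so the shift stays in $\Z_+^{N_S}$. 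Throughout I use the standing hypotheses of Theorem~\ref{thm:IDCV}, namely $\bD^{(1)}_i\leq_{idcv}\bD^{(2)}_i$ and $\E{\bD^{(1)}_i}=\E{\bD^{(2)}_i}$.

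I would interpolate through $T_{\bq^{(1)}}(\bD^{(2)}_i)$, proving
\[
\bO^{(1)}_i\ \leq_{idcv}\ T_{\bq^{(1)}}(\bD^{(2)}_i)\ \leq_{idcv}\ \bO^{(2)}_i .
\]
The second inequality changes only the internal retention, from $q^{(1)}_{i,i}$ to $q^{(2)}_{i,i}$, keeping the input $\bD^{(2)}_i$ fixed. Here Assumption~\ref{assm:phi} enters: since $\bD^{(1)}_i\leq_{idcv}\bD^{(2)}_i$ forces $D^{(1)}_{i,i}\leq_{icv}D^{(2)}_{i,i}$ on the internal marginal and $\phi_i^\star$ is non-decreasing and concave, we get $\E{\phi_i^\star(D^{(1)}_{i,i})}\leq\E{\phi_i^\star(D^{(2)}_{i,i})}$; dividing by the common mean gives $q^{(1)}_{i,i}\leq q^{(2)}_{i,i}$. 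For each fixed $\bd$ a coupling that retains a superset of the internal trials then yields $T_{\bq^{(1)}}(\bd)\leq_{st}T_{\bq^{(2)}}(\bd)$ in the usual stochastic order, hence $\leq_{idcv}$; mixing over $\bd\sim\bD^{(2)}_i$ and using closure of stochastic orders under mixtures gives the second inequality.

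The first inequality is the crux, and amounts to showing that a single kernel $T_\bq$ preserves $\leq_{idcv}$. Fix an increasing directionally concave $\xi$ and set $\Xi(\bd):=\E{\xi(T_\bq(\bd))}$; it suffices to prove $\Xi$ is itself increasing and directionally concave on $\Z_+^{N_S}$, since then $\bD^{(1)}_i\leq_{idcv}\bD^{(2)}_i$ gives $\E{\Xi(\bD^{(1)}_i)}\leq\E{\Xi(\bD^{(2)}_i)}$, which is the desired inequality for this test function. I would represent the thinning by one array of independent Bernoulli trials and couple the four lattice corners $\bd,\bd+\be_a,\bd+\be_b,\bd+\be_a+\be_b$ with the same trials. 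Monotonicity is immediate, as increasing a degree merely appends a nonnegative Bernoulli to one output coordinate. For the mixed second difference ($a\neq b$) I would condition on the two fresh trials: all cases vanish except ``both retained'', which leaves the submodularity increment of $\xi$ and is $\leq 0$ because the two output coordinates are distinct (at most one of $a,b$ equals $i$). For the pure second difference ($a=b$) the same coupling reduces the statement to componentwise concavity of $\xi$, i.e. that $\xi(\bo+\be_{a'})-\xi(\bo)$ is non-increasing in $o_{a'}$. Both being non-positive, $\Xi$ is directionally concave on the lattice.

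I expect the two delicate points to be: (i) the relabeling of the internal coordinate $i\mapsto i^+$ together with the identically zero output coordinates, which must be tracked so the submodularity step indeed compares distinct coordinates; and (ii) the passage from the discrete lattice conditions on $\Xi$ to the integral order $\leq_{idcv}$. For the latter I would use that the $\bD^{(\ell)}_i$ are $\Z_+$-valued, so it is enough to test against functions whose restrictions to the lattice are increasing and directionally concave, and a multilinear interpolation produces a genuine increasing directionally concave extension on $\R^{N_S}$ agreeing with $\Xi$ on $\Z_+^{N_S}$. The probabilistic content is carried entirely by the elementary Bernoulli coupling; beyond the assumed IDCV ordering, equal means, and Assumption~\ref{assm:phi}, no further structure of $p^{(\ell)}_i$ is needed.
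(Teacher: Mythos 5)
Your proposal is correct and follows essentially the same route as the paper's proof in Appendix E: the same binomial-thinning representation of $h_i^{(\ell)}$, the same two-step interpolation through the intermediate term $T_{\bq^{(1)}}(\bD^{(2)}_i)$ (the paper's quantity (\ref{eq:appen2-0})), the same use of Assumption~\ref{assm:phi}, the ICV ordering of the internal marginals, and the equal-means hypothesis to get $q^{1}_{i,i}\leq q^{2}_{i,i}$, and the same shift-by-$\be_i$ argument for the types in $\cS^+\setminus\cS$. The only difference is that where you prove ``thinning preserves the IDCV order'' from scratch via an explicit Bernoulli coupling on the lattice, the paper simply invokes Lemma 2.17 of Meester and Shanthikumar, of which your coupling argument is essentially a self-contained proof.
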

 
We know from Corollary 1~\cite[p. 42]{Harris}
that the only solutions of (\ref{eq:fixed}) 
in the unit cube are $\boldsymbol{\mu}$ and 
${\bf 1}$. However, when $\rho(\bM) > 1$, 
we have $\boldsymbol{\mu} < {\bf 1}$. 

Let $\bff^\ell$ be the generating function 
under the children distributions $\bh^{(\ell)}
= (h^{(\ell)}_i; \ i \in \cS^+)$, 
$\ell = 1, 2$. Then, $\bff^1(\bmu^1) = \bmu^1 
< {\bf 1}$. From the definition of the generating 
function, 
\beqan
f^1_i(\bmu^1)
\myeq \sum_{\bo \in \Z^{2N_S}_+} h^{(1)}_i(\bo)
	\prod_{j \in \cS^+} \left( \mu^1_j \right)^{o_j}
		\ \mbox{ for all } i \in \cS^+.
\eeqan

Since ${\bf O}^{(1)}_i \leq_{idcv}{\bf O}^{(2)}_i$ or, 
equivalently, ${\bf O}^{(2)}_i \leq_{ddcx} 
{\bf O}^{(1)}_i$ for all $i \in \cS^+$, 
we can show $\bff^2(\bs) \leq \bff^1(\bs)$ for all $\bs
< {\bf 1}$ as follows: for any ${\bf 0} < \bs < {\bf 1}$, 
define $\psi_{\bs}: \R^{2 N_S} \to \R$, where 
$\psi_{\bs}(\bx)
= \exp\left( - \sum_{i \in \cS^+}\log(s_i^{-1}) 
	\ x_i \right).$
According to Theorem 3.12.2~\cite[p. 132]{MullerStoyan}, 
$\psi_{\bs}$ is DCX iff $\partial^2
\psi_{\bs}(\bx) / \partial x_i \partial x_j \geq 0$ 
for all $\bx \in \R^{2 N_S}$ and $i, j \in \cS^+$. 
Clearly, 
\beqan
\frac{\partial^2}{\partial x_i \partial x_j} 
	\psi_{\bs}(\bx) = \log(s_i^{-1}) \ 
		\log(s_j^{-1}) \ \psi_{\bs}(\bx) 
		\geq 0
\eeqan
for all $\bx \in \R^{2 N_S}$ and $i, j \in \cS^+$, 
and $\psi_{\bs}$ is DCX. 

Together with $\bO^{(2)}_i \leq_{ddcx} \bO^{(1)}_i$
and $\bmu^1 < {\bf 1}$, this tells us that, for all
$i \in \cS^+$, 
\beqan
\myhb f^2_i(\bmu^1)
\myeq \mathbbm{E}\left[ \psi_{\bmu^1}({\bf O}^{(2)}_i) \right]
\leq \mathbbm{E}\left[ \psi_{\bmu^1}({\bf O}^{(1)}_i) \right]
= \mu^1_i,
	\label{eq:appen1} 
\eeqan 
and $\bff^2(\bmu^1) \leq \bmu^1$. Corollary 2 
\cite[p. 42]{Harris} states that, if 
${\bf 0} \leq \bff^2(\bmu^1) \leq \bmu^1 \leq {\bf 1}$, 
then $\bmu^2 \leq \bmu^1$.  This completes the proof
of the theorem.

\subsection{Proof of Lemma~\ref{lemma:hidcv}}
	\label{appen:lemma-hidcv}

We will first prove the lemma for $i \in \cS$
and then for $i \in \cS^+ \setminus \cS
= \{N_S + 1, \ldots, 2 N_S\}$. 

$\bullet$ ${\bf O}^{(1)}_i \leq _{idcv} 
{\bf O}^{(2)}_i$, $i \in \cS$:
We prove the claim using the definition of 
IDCV order. 
Suppose that $g: \R^{2 N_S} \to \R$ is IDCV. 
Recall that this is equivalent to saying that
$-g$ is DDCX. For notational convenience, 
for each $j \in \cS$, define 
\beqan 
\cO_j 
:= \big\{ \bo \in \Z^{2N_S}_+ \ | \ o_{j'} = 0 \mbox{
	for all } j' \notin \cS^+_j \big\}.
\eeqan

For $\ell = 1, 2$, 
\beqa
\mathbbm{E}\left[ g(\bO^{(\ell)}_i) \right]
\myeq \sum_{\bo \in \cO_i} g(\bo) \ h^{(\ell)}_i(\bo)
		\label{eq:appen2-1} \\
\myeq \sum_{\bo \in \cO_i} g(\bo) \left[
	\sum_{\bd \geq \bar{\bd}(\bo)} p^{(\ell)}_i(\bd) 
		\right. \lb 
&& \times \left. \prod_{j \in \cS} 
			{{d_j}\choose{\bar{d}_j(\bo)}}
			\left(q^\ell_{i,j}\right)^{\bar{d}_j(\bo)}
			\left( \bar{q}^\ell_{i,j} 
				\right)^{d_j - \bar{d}_j(\bo)}
			\right],  
	\nonumber
\eeqa
where the mapping $\bar{\bd}$ was defined in (\ref{eq:bard}), 
$q^\ell_{i,j} = q_{i,j}$ for $j \neq i$, and 
$q^\ell_{i,i}$ is the probability that an internal 
neighbor of a CS $i$ agent will fall victim to 
infection, under the internal degree distribution 
$p^{(\ell)}_{i,i}$, which is given in (\ref{eq:qii-1}).

We shall prove $\mathbbm{E}\left[ g(\bO^{(1)}_i) 
\right] \leq \mathbbm{E}\left[ g(\bO^{(2)}_i) \right]$
in two steps. First, we will show 
\beqa
\mathbbm{E}\left[ g(\bO^{(1)}_i) \right] 
\myeq \sum_{\bo \in \cO_i} g(\bo) \left[
	\sum_{\bd \geq \bar{\bd}(\bo)} p^{(1)}_i(\bd) 
		\right. \lb 
&& \myb \times \left. \prod_{j \in \cS} 
			{{d_j}\choose{\bar{d}_j(\bo)}}
			\left(q^1_{i,j}\right)^{\bar{d}_j(\bo)}
			\left( \bar{q}^1_{i,j} 
				\right)^{d_j - \bar{d}_j(\bo)}
			\right]
	\lb
\myleq \sum_{\bo \in \cO_i} g(\bo) \left[
	\sum_{\bd \geq \bar{\bd}(\bo)} p^{(2)}_i(\bd) 
		\right. \label{eq:appen2-0} \\
&& \myb \times \left. \prod_{j \in \cS} 
			{{d_j}\choose{\bar{d}_j(\bo)}}
			\left(q^1_{i,j}\right)^{\bar{d}_j(\bo)}
			\left( \bar{q}^1_{i,j} 
				\right)^{d_j - \bar{d}_j(\bo)}
			\right].
	\nonumber
\eeqa
Then, we will demonstrate that $(\ref{eq:appen2-0})
\leq \mathbbm{E}\left[ g(\bO^{(2)}_i) \right]$.
If this is true for all IDCV functions $g$, it implies
${\bf O}^{(1)}_i \leq_{idcv} {\bf O}^{(2)}_i$ by 
definition.

After interchanging the order of the two summations
in (\ref{eq:appen2-1}), we get
\beqan
&& \myhb \mathbbm{E}\left[ g(\bO^{(\ell)}_i) \right]
	= \sum_{\bd \in \Z^{N_S}_+} p^{(\ell)}_i(\bd) \lb 
&& \myhb \times \Big[
	\underbrace{\sum_{\bo \in \cO_i: \bar{\bd}(\bo) \leq \bd} g(\bo) 
		\prod_{j \in \cS} {{d_j}\choose{\bar{d}_j(\bo)}}
			\left(q^\ell_{i,j}\right)^{\bar{d}_j(\bo)}
			\left( \bar{q}^\ell_{i,j} \right)^{d_j 
				- \bar{d}_j(\bo)}}_\text{
				\large{$=: g^*_\ell(\bd)$}} \Big]. 
\eeqan
Note that $g^*_\ell(\bd)$ is $\mathbbm{E}\left[ g(\bY(\bd, 
\ell)) \right]$, where $\bY(\bd, \ell)$ is a vector 
consisting of 
$N_S$ mutually independent binomial rvs. In other words, 
with a little abuse of notation, the $j$-th element
$Y_j(d_j, \ell)$ is a Binomial($d_j, q^\ell_{i,j}$) rv 
and, hence, can be viewed as a sum of $d_j$ i.i.d. 
Bernoulli($q^\ell_{i,j}$) rvs. 

In order to finish the proof of the first step, we make
use of the following lemma. 
\myskip

{\em Lemma 2.17 \cite{MeesterShanti}:} Let $\left( S^{(i)}_j, 
j = 1, 2, ... \right)$ be independent sequences of i.i.d. 
nonnegative rvs. Suppose $\theta: \R^m \to \R$
is increasing DCX (resp. IDCV). Then, $\zeta: \N^m \to \R$ 
defined by $\zeta(\bn) = \E{\theta \left( \sum_{j=1}^{n_1} 
S^{(1)}_j, \ldots, \sum_{j=1}^{n_m} S^{(m)}_j \right)}$ 
is increasing DCX (resp. IDCV). 
\myskip

Since the function $g$ was assumed IDCV and $\bY(\bd, 1)$ 
consists of mutually independent binomial rvs (each of which
is a sum of i.i.d. Bernoulli rvs), the above 
lemma tells us that $g^*_1$ is IDCV. Since $g^*_1$ is IDCV 
and, with a little abuse of notation, 
$p^{(1)}_i \leq_{idcv} p^{(2)}_i$, we have 
\beqan
&& \myhb \mathbbm{E}\left[ g(\bO^{(1)}_i) \right]
= \mathbbm{E}\left[ g^*_1({\bf D}^{(1)}_i) \right] 
= \sum_{\bd \in \Z^{N_S}_+} p^{(1)}_i(\bd) \ g^*_1(\bd) \lb
\myleq \sum_{\bd \in \Z^{N_S}_+} p^{(2)}_i(\bd) \ g^*_1(\bd)
= \E{g^*_1({\bf D}^{(2)}_i)}
= (\ref{eq:appen2-0}).
	\label{eq:appen2-2}
\eeqan
This proves the first step. 
	
In order to prove the second step
\beqan 
(\ref{eq:appen2-0}) 
\myleq \mathbbm{E}\left[ g(\bO^{(2)}_i) \right] 
= \E{g^*_2({\bf D}^{(2)}_i)}
= \sum_{\bd \in \Z^{N_S}_+} p^{(2)}_i(\bd) \ g^*_2(\bd), 
\eeqan
it suffices
to show that $q^1_{i,j} \leq q^2_{i,j}$ for all $j \in 
\cS$; if this is true, for all $\bd \in \Z_+^{N_S}$, 
$Y_j(d_j,1) \leq_{st} Y_j(d_j, 2)$ for all $j \in \cS$, 
where $\leq_{st}$ denotes the inequality with respect
to {\em usual stochastic order}.  
Since $Y_j(d_j, \ell)$, $j \in \cS$, are mutually
independent, Theorem 3.3.8~\cite[p. 93]{MullerStoyan}
tells us ${\bf Y}(\bd, 1) \leq_{st} {\bf Y}(\bd, 
2)$. Because the function $g$ is assumed increasing, 
this implies 
\beqan
g^*_1(\bd) 
= \E{g(\bY(\bd, 1))} 
\leq \E{g(\bY(\bd, 2))}
= g^*_2(\bd) 
\eeqan
from the definition of usual stochastic order
\cite{MullerStoyan}. 

First, recall that $q^1_{i,j} = q^2_{i,j} = q_{i,j}$ 
for all $j \neq i$ because they are assumed fixed. 
Thus, we only need to show $q^1_{i,i} \leq q^2_{i,i}$. 
Using the definition in (\ref{eq:qii-1}), 
\beqan
q^1_{i,i}
\myeq \frac{1}{d^{(1)}_{i,\avg}} 
	\sum_{d \in \N} p^{(1)}_{i,i}(d)
	\left( d \cdot \wp^{in}_i(d, 1) \right) \lb
\myeq \frac{1}{d^{(1)}_{i,\avg}} 
	\sum_{d \in \N} p^{(1)}_{i,i}(d)
	\ \phi_i^*(d). 
\eeqan
Because $\phi_i^*$ satisfies Assumption~\ref{assm:phi} 
and $p^{(1)}_i \leq_{idcv} p^{(2)}_i$, which implies 
$p^{(1)}_{i,i} \leq_{icv} p^{(2)}_{i,i}$, we obtain
\beqan
q^1_{i,i} 
\myeq \frac{1}{d^{(1)}_{i,\avg}} 
	\sum_{d \in \N} p^{(1)}_{i,i}(d) \ \phi_i^*(d) \lb
\myleq \frac{1}{d^{(2)}_{i,\avg}} 
	\sum_{d \in \N} p^{(2)}_{i,i}(d) \ \phi_i^*(d) 
	= q^2_{i,i}. 
\eeqan
Recall that $d^{(1)}_{i,\avg} = d^{(2)}_{i,\avg}$
from the assumption in the theorem. 
This proves $\mathbbm{E}\left[ g(\bO^{(1)}_i) 
\right] \leq \mathbbm{E}\left[ g(\bO^{(2)}_i) \right]$.
Since this inequality holds for every IDCV function $g$
(with well defined expectations), we have 
${\bf O}^{(1)}_i \leq_{idcv} {\bf O}^{(2)}_i$. 
\myskip

$\bullet$ ${\bf O}^{(1)}_i \leq_{idcv} 
{\bf O}^{(2)}_i$, $i \in \cS^+ \setminus
\cS$: First, note that if $\varphi:\R^{2 N_S} \to \R$ is IDCV, 
then so is $\varphi^*_{\be}: \R^{2 N_S} \to \R$ 
with $\varphi^*_{\be}(\bx)
= \varphi(\bx + \be)$ for all $\be \in \Z^{2 N_S}$. This
follows directly from the observation that 
$\varphi^*_{\be}$ satisfies the characterization
(ii) and (iii) of DCV functions 
in Theorem 3.12.2 \cite[p.132]{MullerStoyan}.

This tells us that $\tilde{p}^{(\ell)}_i$, $\ell = 1, 2$, 
with $\tilde{p}^{(\ell)}_i(\bd) := 
p^{(\ell)}_i(\bd +\be_i)$ satisfy $\tilde{p}^{(1)}_i
\leq_{idcv} \tilde{p}^{(2)}_i$. The claim
that ${\bf O}^{(1)}_i \leq_{idcv} {\bf O}^{(2)}_i$ 
now follows from the proof of the previous case.

\subsection{Proof of Theorem~\ref{thm:LT}}
	\label{appen:LT}

From the definition of $\bmu^2$, we have
$\bff^2(\bmu^2) = \bmu^2 < {\bf 1}$.
Using the given children distributions, 
for all $i \in \cS$,
\beqa
&& \myhb f^2_i(\bmu^2) = \mu^2_i
= \sum_{\bo \in \cO_i} 
	h^{(2)}_i(\bo) 
	\prod_{j \in \cS^+_i} \left( 
		\mu^2_{j} \right)^{o_j} \lb 
\myeq \sum_{\bo \in \cO_i} 
	h^{(2)}_i(\bo) \ \exp\left( -
		\sum_{j \in \cS^+_i} o_j
		\pi^2_j \right),
	\label{eq:LT-1}
\eeqa
where $\pi^2_j = \log(1 / \mu^2_j) > 0$, 
$j \in \cS^+$. 
Similarly, for all $i^+ \in \cS^+ \setminus
\cS$, 
\beqa
&& \myhb f^2_{i^+}(\bmu^2) = \mu^2_{i^+}
= \sum_{\bo \in \cO_i} 
	h^{(2)}_{i^+}(\bo) 
	\prod_{j \in \cS^+_i} \left( 
		\mu^2_{j} \right)^{o_j} \lb 
\myeq \sum_{\bo \in \cO_i} 
	h^{(2)}_{i^+}(\bo) \ \exp\left( - 
	\sum_{j \in \cS^+_i} o_j
		\pi^2_j \right).
	\label{eq:LT-2}
\eeqa

Recall that ${\bf O}^{(1)}_i \leq_{LT} {\bf O}^{(2)}_i$ 
for all $i \in \cS^+$. Thus, because $\pi^2_j$, 
$j \in \cS^+_i$, 
are all positive, we get
\beqan
(\ref{eq:LT-1})
\myleq \sum_{\bo \in \cO_i} 
	h^{(1)}_i(\bo) \ \exp\left( - 
	\sum_{j \in \cS^+_i} o_j
		\pi^2_j \right) 
= f^1_i(\bmu^2),  
\eeqan
which yields $\mu^2_i \leq f^1_i(\bmu^2)$ for all 
$i \in \cS$. Following the same argument 
starting with (\ref{eq:LT-2}) gives
$\mu^2_{i^+} \leq f^1_{i^+}(\bmu^2)$ for all 
$i^+ \in \cS^+ \setminus \cS$. Together, we obtain
$\bmu^2 \leq \bff^1(\bmu^2)$. Corollary
2 \cite[p. 42]{Harris} now tells us $\bmu^2 
\leq \bmu^1$, completing the proof of the 
theorem.

\section{Conclusion}
	\label{sec:Conclusion}

We investigated the impact of variability and 
correlations in degrees of agents on the robustness
of interdependent systems. Our findings suggest
that they both can have significant influence
on the likelihoods of having catastrophic 
failures in complex systems comprising multiple
heterogeneous systems via dependency among
the agents. In particular, our results revealed
that both increasing variability and positive
dependence render the system more robust
against random failures. 

We are currently working to incorporate other
graph properties displayed by both 
natural and engineered systems, such as
assortativity and clustering, and to understand
their role in the robustness of interdependent
systems. Our goal is to identify a suitable
way of imposing a partial order on the 
underlying dependence graphs and compare their
resilience against both random and targeted
attacks.

\end{document}